\PassOptionsToPackage{unicode}{hyperref}
\PassOptionsToPackage{hyphens}{url}
\PassOptionsToPackage{dvipsnames,svgnames,x11names}{xcolor}
\documentclass[
  12pt]{article}

\usepackage{amsmath,amssymb}
\usepackage{amsthm} 
\usepackage{dsfont} 
\newtheorem{Theorem}{Theorem}
\newtheorem{Proposition}{Proposition}%
\newtheorem{Lemma}{Lemma}
\newtheorem{Assumption}{Assumption}
\newtheorem{Corollary}{Corollary}

\usepackage{multirow}
\usepackage{xr} 
\usepackage{xcite} 
\usepackage{iftex}
\ifPDFTeX
  \usepackage[T1]{fontenc}
  \usepackage[utf8]{inputenc}
  \usepackage{textcomp} 
\else 
  \usepackage{unicode-math}
  \defaultfontfeatures{Scale=MatchLowercase}
  \defaultfontfeatures[\rmfamily]{Ligatures=TeX,Scale=1}
\fi
\usepackage{lmodern}
\ifPDFTeX\else  
\fi
\IfFileExists{upquote.sty}{\usepackage{upquote}}{}
\IfFileExists{microtype.sty}{
  \usepackage[]{microtype}
  \UseMicrotypeSet[protrusion]{basicmath} 
}{}
\makeatletter
\@ifundefined{KOMAClassName}{
  \IfFileExists{parskip.sty}{%
    \usepackage{parskip}
  }{
    \setlength{\parindent}{0pt}
    \setlength{\parskip}{6pt plus 2pt minus 1pt}}
}{
  \KOMAoptions{parskip=half}}
\makeatother
\usepackage{xcolor}
\makeatletter
\ifx\paragraph\undefined\else
  \let\oldparagraph\paragraph
  \renewcommand{\paragraph}{
    \@ifstar
      \xxxParagraphStar
      \xxxParagraphNoStar
  }
  \newcommand{\xxxParagraphStar}[1]{\oldparagraph*{#1}\mbox{}}
  \newcommand{\xxxParagraphNoStar}[1]{\oldparagraph{#1}\mbox{}}
\fi
\ifx\subparagraph\undefined\else
  \let\oldsubparagraph\subparagraph
  \renewcommand{\subparagraph}{
    \@ifstar
      \xxxSubParagraphStar
      \xxxSubParagraphNoStar
  }
  \newcommand{\xxxSubParagraphStar}[1]{\oldsubparagraph*{#1}\mbox{}}
  \newcommand{\xxxSubParagraphNoStar}[1]{\oldsubparagraph{#1}\mbox{}}
\fi
\makeatother

\usepackage{longtable,booktabs,array}
\usepackage{calc} 
\usepackage{etoolbox}
\makeatletter
\patchcmd\longtable{\par}{\if@noskipsec\mbox{}\fi\par}{}{}
\makeatother
\IfFileExists{footnotehyper.sty}{\usepackage{footnotehyper}}{\usepackage{footnote}}
\makesavenoteenv{longtable}
\usepackage{graphicx}
\makeatletter
\def\maxwidth{\ifdim\Gin@nat@width>\linewidth\linewidth\else\Gin@nat@width\fi}
\def\maxheight{\ifdim\Gin@nat@height>\textheight\textheight\else\Gin@nat@height\fi}
\makeatother
\setkeys{Gin}{width=\maxwidth,height=\maxheight,keepaspectratio}
\makeatletter
\def\fps@figure{htbp}
\makeatother

\makeatletter
\@ifpackageloaded{caption}{}{\usepackage{caption}}
\AtBeginDocument{%
\ifdefined\contentsname
  \renewcommand*\contentsname{Table of contents}
\else
  \newcommand\contentsname{Table of contents}
\fi
\ifdefined\listfigurename
  \renewcommand*\listfigurename{List of Figures}
\else
  \newcommand\listfigurename{List of Figures}
\fi
\ifdefined\listtablename
  \renewcommand*\listtablename{List of Tables}
\else
  \newcommand\listtablename{List of Tables}
\fi
\ifdefined\figurename
  \renewcommand*\figurename{Figure}
\else
  \newcommand\figurename{Figure}
\fi
\ifdefined\tablename
  \renewcommand*\tablename{Table}
\else
  \newcommand\tablename{Table}
\fi
}
\@ifpackageloaded{float}{}{\usepackage{float}}
\floatstyle{ruled}
\@ifundefined{c@chapter}{\newfloat{codelisting}{h}{lop}}{\newfloat{codelisting}{h}{lop}[chapter]}
\floatname{codelisting}{Listing}

\makeatother
\makeatletter
\@ifpackageloaded{caption}{}{\usepackage{caption}}
\@ifpackageloaded{subcaption}{}{\usepackage{subcaption}}
\makeatother

\ifLuaTeX
  \usepackage{selnolig}  
\fi
\usepackage[]{natbib}
\usepackage{bookmark}

\IfFileExists{xurl.sty}{\usepackage{xurl}}{} 
\hypersetup{
  pdftitle={Title},
  pdfauthor={Author 1; Author 2},
  pdfkeywords={3 to 6 keywords, that do not appear in the title},
  colorlinks=true,
  linkcolor={blue},
  filecolor={Maroon},
  citecolor={Blue},
  urlcolor={Blue},
  pdfcreator={LaTeX via pandoc}}

\newcommand{\anon}{1}

\begin{document}

\def\spacingset#1{\renewcommand{\baselinestretch}%
{#1}\small\normalsize} \spacingset{1}


\if1\anon
{
  \title{\bf Graph Neural Networks for Generalized Mundlak Estimator under Network Confounding}
  \author{Lianyan Fu\thanks{
    This study was supported by the National Social Science Fund of China (24BTJ064). The authors would like to express their gratitude to these funding organizations for their financial support, which enabled the completion of this study.}\hspace{.2cm}\\
    School of Mathematics and Statistics, Liaoning University\\
    Rui Wang \\
    School of Mathematics and Statistics, Liaoning University\\
    Zihan Zhang\\
    School of Mathematics and Statistics, Liaoning University}
  \maketitle
} \fi

\if0\anon
{
  \bigskip
  \bigskip
  \bigskip
  \begin{center}
    {\LARGE\bf Graph Neural Networks for Generalized Mundlak Estimator under Network Confounding}
\end{center}
  \medskip
} \fi

\bigskip
\begin{abstract}
This paper proposes a generalized Mundlak estimator based on graph neural networks (\textit{GME-GNN}). The estimator is designed to mitigate bias arising from group-level heterogeneity and to accommodate within-group dependence among individuals. 
Traditional fixed-effects models handle group heterogeneity via group-specific intercepts, but require overly strict linear additivity and intra-group independence assumptions, and are confined to within-group comparisons. Rather than relying on intercepts, \textit{GME-GNN} uses aggregated group-level balancing statistics to fully control between-group confounding, enabling valid cross-group comparisons and relaxing linearity constraints. It further employs graph neural network message-passing to adaptively learn nonlinear representations and capture intra-group interaction effects. 
Theoretical analysis shows that the estimator satisfies double robustness and is asymptotically normal. Simulation and empirical studies confirm its performance.
\end{abstract}

\noindent%
{\it Keywords:} Network Confounding, Group-level Heterogeneity, Graph Neural Networks, Generalized Mundlak Estimator
\vfill

\newpage
\spacingset{1.75} 

\section{Introduction}
\label{sec:intro}
When estimating causal effects in observational studies, research units are typically grouped by factors such as geographical region and industry category. \citet{altonji2005selection} have documented that unobserved characteristics across these groups can substantially influence study outcomes. The classical fixed-effects model accounted for unobserved group-level heterogeneity by incorporating group-specific intercepts. However, as pointed out by \citet{arkhangelsky2024fixed}, the estimation of causal effects using fixed-effects models relied on overly strict assumptions. These assumptions included homogeneous treatment effects, linear-additive relationships between treatment variables and covariates, and complete independence among individuals within groups.

To relax these assumptions, the Mundlak estimator reinterpreted the fixed-effects model by replacing group fixed effects with group-level means of the treatment variables and covariates \citep{mundlak1978pooling}. \citet{arkhangelsky2024fixed} further extended the classical Mundlak approach and demonstrated that group-level balancing scores sufficiently eliminated between-group confounding effects. However, it crucially relied on the stable unit treatment value assumption (\textit{SUTVA}), which posed significant challenges for its application in network observational studies.

Recent studies have developed inference methods for clustered data settings where interference is confined within clusters. This is sometimes called partial interference \citep{sobel2006randomized}. \citet{hudgens2008toward} proposed a framework for estimating causal effects in two-stage randomized designs under the constraint of partial interference. \citet{liu2019doubly} proposed a doubly robust estimation method under partial interference, which helped reduce bias from model misspecification. \citet{qu2026semiparametric} proposed generalized  augmented inverse propensity weighted (\textit{AIPW}) estimators for heterogeneous partial interference that are semiparametrically efficient and robust to model misspecification.
However, these estimators still failed to adequately capture nonlinear dependencies and higher-order neighborhood effects.
To overcome these limitations, \citet{leung2022graph} introduced a graph neural network (\textit{GNN})-based framework that used message-passing to adaptively aggregate multiorder neighborhood information. In addition, \cite{lin2025scalable} and \citet{hu2025graph} have demonstrated the effectiveness of \textit{GNNs} for estimating individual treatment effects in network settings, addressing computational scalability and confounder disentanglement, respectively. These methods  effectively captured nonlinear and higher-order dependencies, significantly improving causal estimation in complex network environments.

To address the core challenge of violated \textit{SUTVA} due to within-group interdependencies, this paper proposes a generalized Mundlak estimator based on graph neural networks (\textit{GME-GNN}). This estimator synergizes the Mundlak method's strength in modeling between-group heterogeneity with the power of graph neural networks
(\textit{GNNs}) to adjust for network confounding.
It thereby relaxes the three core restrictive assumptions of traditional fixed-effects models discussed earlier.
Specifically, we extend the classical Mundlak approach by constructing generalized group-aggregated balancing statistics that incorporate network-weighted aggregates of treatment and covariate information.
This construction eliminates between-group unobserved confounding while accounting for network dependencies entirely ignored in the original framework.
Meanwhile, we utilize the message-passing mechanism in \textit{GNNs} \citep{corso2020principal} to learn adaptive node embeddings. These embeddings effectively capture dependency structures in the network by iteratively aggregating the features and treatment states of neighboring nodes. For statistical inference, we employ the network \textit{HAC} estimator \citet{kojevnikov2021limit} and adopt the bandwidth selection procedure by \citet{leung2022graph} to account for the machine learning estimation error.

We evaluated the propose theoretical framework through a combination of simulation studies and an empirical analysis examining the impact of the High-Tech Enterprise Certification policy.
We simulate based on the Watts–Strogatz small-world network model, examining scenarios with both low/high inter-group heterogeneity and weak/strong network dependence. The results indicate that the bias of \textit{GME-GNN} is significantly lower than that of the traditional Mundlak estimator under most settings. The empirical analysis is based on a sample of 4693 A‑share listed companies in China in 2024. Firms are grouped into sectors according to the China Securities Regulatory Commission industry classification, and a network is constructed within each group using upstream‑downstream supply‑chain linkages. By using the balanced statistic $\overline{S_g(i)}$ to control for inter-group confounding, the \textit{GME-GNN} estimator captures both the direct causal effect of high‑tech enterprise certification on financial performance and the indirect spillover effects transmitted through the supply‑chain network.

\section{Model Specification}
\label{sec:Model}
\subsection{Set up}\label{subsec1}
The initial fixed-effect model for estimating grouped causal effects is expressed as
\begin{equation}
	Y_i = \alpha_{g(i)} + W_i\tau + \boldsymbol{X}_i^\top \boldsymbol{\beta} + \varepsilon_i,
	\label{eq:tau_1}
\end{equation}
where $g(i)$ denotes the group to which the $i$-th unit belongs. The term $\alpha_{g(i)}$ represents the group fixed effect, capturing the influence of unobserved factors common to all units within the same group. Here, $W_i$ denotes the treatment variable for individual $i$, and $\tau$ represents the average causal effect of the treatment. The vector $\boldsymbol{X}_i$ comprises control variables that may influence the outcome variable $Y_i$, and $\boldsymbol{\beta}$ is the corresponding coefficient vector. The error term $\varepsilon_i$ is assumed to be independent of both $W_i$ and $\boldsymbol{X}_i$.

As \cite{arkhangelsky2024fixed} point out, the ability of equation \eqref{eq:tau_1} to provide valid causal effect estimates is critically dependent on on a number of strong assumptions. This paper elucidates these assumptions by breaking them down into four components. First, the constant treatment effect assumption, this implies homogeneous treatment impacts. In practice, it is likely the effects of the treatment are heterogeneous. Second, the core of equation \eqref{eq:tau_1} is the assumption of group unconfoundedness. Third, the linearity and additivity assumption demands that relationships between the fixed effects $\alpha_{g(i)}$, treatment $W_i$, and covariates $\boldsymbol{X}_i$ are linear and additive. In addition, the independence assumption requires that groups are independent, and that individuals within the same group are also independent.

The original assumption requires that individuals within a group are independent. However, in reality, individuals within a group may be interrelated. Therefore, we first relax this assumption, the model is subsequently expressed in the following form
\[
Y_i = \alpha_{g(i)} + m(\mathbf{W}_g, \mathbf{X}_g, \mathbf{A}_g) + \varepsilon_i,
\]
where $\mathbf{W}_{g}$ denotes the vector of treatment assignments for all units within the group, $\mathbf{X}_{g}$ represents the matrix of observable covariates for all units in the group, and the group comprises $N_{g}$ units in total. The symbol $\mathcal{N}_{{g}}$ refers to a set of units within the group. These units are interconnected through an adjacency matrix $\mathbf{A}_{g}$, where $A_{ij} = 1$ indicates a connection between units $i$ and $j$, whereas $A_{ij} = 0$ signifies no connection.

Due to the presence of group fixed effects, the estimation of causal effects is typically confined to within-group comparisons between treatment and control units. To allow for partial between-group treatment-control comparisons, \cite{mundlak1978pooling} approach can be employed by incorporating group-level covariate averages as additional regressors. \cite{arkhangelsky2024fixed} further demonstrate that these group-average terms not only effectively control between-group heterogeneity but also accommodate more flexible regression specifications. Consequently, the structural model is formally expressed as 
\begin{equation}
	Y_i = \mu_{g}({\mathbf{W}}_g, \mathbf{X}_g, \mathbf{A}_g, \overline{X}_{g(i)}, \overline{W}_{g(i)}, \overline{{A}_{g(i)}{X}_{g(i)}}, \overline{{A}_{g(i)}{W}_{g(i)}},\varepsilon_i),
\end{equation}
where the latter four terms can be written as
$\bar{S}_g = \left( \overline{W}_g, \overline{X}_g, \overline{A_g W_g}, \overline{A_g X_g} \right).$

\subsection{Identification}
We first define the group label $L_g$ that characterizes the group-level latent structural features. 
The group label $L_g$ is a random variable randomly drawn from the group super-population, 
corresponding to the $g$-th group. It fully encodes all inter-group latent heterogeneity that 
distinguishes this group from other groups, and is the native core variable that characterizes 
the group-level latent structural features.

This study establishes the sampling assumption characterizing the group-unit hierarchical data structure.
\begin{Assumption}[Grouped sampling]\label{ass1}
	We randomly sample \( M \) groups from a super-population, where \( g \in \{1, \ldots, M\} \) denotes a generic group. Each group is labeled with its unique group identifier $L_g$. The g-th group (corresponding to $L_g = l$) contains $N_g$ individuals. The network dependencies among individuals within a group are captured by the adjacency matrix \(\boldsymbol{A}_g\).
	The total sample size is given by  \( N = \sum_{g=1}^M N_g \), with each group size \( N_g \geq 2 \) being a random variable.
\end{Assumption}

The sampling framework of this study adopts a two-step sequential random sampling method, following the stratified sampling approach of \cite{arkhangelsky2024fixed}. First, we randomly draw $M$ independent groups from the group super-population, and assign each group a random group label $L_g$ to identify its latent structural features. Second, we proceed to draw $n_g$ individuals from the sub-population of individuals associated with that latent feature $l$. The within-group network adjacency matrix $A_g$ is then constructed based on the actual dependency relationships among these individuals. Under the grouped sampling assumption, the entire dataset can be regarded as \( M \) independent and identically distributed (\textit{i.i.d.}) realizations of the random structure 
\( \left( L_g, \{(W_i, 
X_i, Y_i, A_{ij})\}_{i: g(i) = g} \right) \). This representation ensures statistical independence across groups.

In the field of network causal inference, numerous studies have investigated formulations of unconfoundedness conditions. For instance, \cite{emmenegger2022treatment} and \cite{forastiere2021identification} propose summarizing confounders using low-dimensional network control variables. However, these methods frequently neglect confounding stemming from higher-order neighbor attributes and complex network topologies. In contrast, the group-level unconfoundedness assumption introduced in this study is analogous to the conventional form under the \textit{SUTVA} framework.

\begin{Assumption}[Group-level Unconfoundedness]\label{ass2}
	\[
	\mathbf{W}_g \perp Y_i(\boldsymbol{w}_g) \mid \mathbf{X}_g, \mathbf{A}_g, L_{g(i)}.
	\]
\end{Assumption}
Unlike existing methods, our assumption does not presuppose that the confounding function of \( \mathbf{X}_g \) and \( \mathbf{A}_g \) must take the form of low-dimensional summary statistics. Rather, it permits confounding to be expressed through arbitrarily complex functions of \( \mathbf{X}_g \) and \( \mathbf{A}_g \).

In the presence of network interference within observational data, establishing large-sample theory requires imposing structural restrictions on network dependence to ensure weak dependence. To this end, let \( \mathcal{N}_g(i, s) = \{j \in \mathcal{N}_g: \ell_{\mathbf{A}_g}(i, j) \leq s\} \) denote the \( s \)-th order neighborhood of node \( i \), where the path distance \( \ell_{\mathbf{A}_g}(i, j) \) is the length of the shortest path between nodes \( i \) and \( j \) in the network \( \mathbf{A}_g \). We introduce a nonparametric decaying interference model.
\begin{Assumption}[Approximate Neighborhood Independence]\label{ass3}
	For each group \( g \), there exists a function \( \gamma_{g}: \mathds{R}_+ \to \mathds{R}_+ \) satisfies
	\[
	\sup_{N_g} \{\gamma_{g}(s)\} \to 0 \quad \text{as} \quad s \to \infty,
	\]
	and 
	\begin{align}
		\max_{i:g(i)=g} 
		\mathds{E} \Bigg[ 
		\Bigg| 
		\mu_{g(i)}\bigl( \mathbf{W}_g, \mathbf{X}_g, \mathbf{A}_g, \bar{S}_g(i),\varepsilon_i \bigr) - 
		& \mu_{g(i,s)}\bigl( \mathbf{W}_{\mathcal{N}_g(i,s)}, \mathbf{X}_{\mathcal{N}_g(i,s)}, \mathbf{A}_{\mathcal{N}_g(i,s)}, \bar{S}_g(i), \varepsilon_{i_{\mathcal{N}_g(i,s)}}\bigr)\Bigg| \nonumber \\
		&\,\Bigg|\, 
		\mathbf{W}_g, \mathbf{X}_g, \mathbf{A}_g, \bar{S}_g(i) 
		\Bigg] 
		\leq \gamma_{g}(s) 
		\quad \text{a.s.}
		\label{eq1}
	\end{align}
\end{Assumption}
The assumption adopted in this study is conceptually similar to the approximate neighborhood interference (\textit{ANI}) framework proposed by \cite{leung2022graph}. In contrast, \( \mu_{g(i,s)}(\cdot) \) represents the outcome function for the same unit under a restricted interference model with a neighborhood radius of s.  This restricted function serves as an approximation to the true outcome mapping under limited network interference. The approximation error between the true outcome and the outcome under the \( s \)-neighborhood restricted interference  is governed by \( \gamma_{g}(s) \), which decays to zero as the neighborhood radius \( s \) increases.

For each group \( g \), we define the empirical measure \( \mathds{P}_g \) as a network-weighted distribution of the observed variables \( (W_i, X_i, \{A_{ij}\}_{j:g(j)=g}) \). This weighted formulation explicitly accounts for localized interference effects under the \textit{ANI} framework. Formally, for any set \( B \subseteq \mathds{X} \times \{0,1\} \times \{0,1\}^{N_g} \), \( \mathds{P}_g(B) \) is defined through the following relationship
\begin{equation}
	\mathds{P}_g(B) = \frac{1}{Z_g} \sum_{i: g(i)=g} q_i \cdot \mathbf{1}\{( X_i, W_i, \{A_{ij}\}_{j:g(j)=g}) \in B\}, 
\end{equation}
where the interference weight \( q_i = \sum_{j \in \mathcal{N}_g(i,s)} \frac{1}{\ell_{\mathbf{A}_g}(i,j)^\gamma + \epsilon} \) is defined for each unit \( i \). In this formulation, \( \gamma \geq 1 \) controls the interference decay rate, \( \epsilon > 0 \) ensures numerical stability, and \( Z_g = \sum_i q_i \) serves as the normalizing constant.

\begin{Proposition}[Unconfoundedness with Network-weighted Measure]
	\[
	\mathbf{W}_g \perp Y_i(\boldsymbol{w}_g) \mid \mathbf{X}_g, \mathbf{A}_g, \mathds{P}_{g(i)}.
	\]
\end{Proposition}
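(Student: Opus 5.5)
The argument mirrors the proof in \cite{arkhangelsky2024fixed} that the empirical distribution of a cluster is a sufficient balancing object. There are two steps. First, under Assumption~\ref{ass1}, the network-weighted measure $\mathds{P}_{g}$ is a deterministic function of the group data $(\mathbf{X}_g,\mathbf{W}_g,\mathbf{A}_g)$, because the weights $q_i$ and $Z_g$ depend only on $\mathbf{A}_g$ through path distances. Second, Assumption~\ref{ass2} gives unconfoundedness given $(\mathbf{X}_g,\mathbf{A}_g,L_{g(i)})$. The task is to replace $L_{g(i)}$ by $\mathds{P}_{g(i)}$. The natural tool is a balancing-score argument: show that
\[
\Pr\bigl(\mathbf{W}_g=\boldsymbol{w}\mid \mathbf{X}_g,\mathbf{A}_g,L_g\bigr)=\Pr\bigl(\mathbf{W}_g=\boldsymbol{w}\mid \mathbf{X}_g,\mathbf{A}_g,\mathds{P}_g\bigr),
\]
that is, the group-level propensity depends on $L_g$ only through $\mathds{P}_g$. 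Then apply the Rosenbaum--Rubin argument. Conditioning on $Y_i(\boldsymbol{w}_g)$ in addition and using Assumption~\ref{ass2}, the left side is unchanged. Iterated expectations over $L_g$ given $(\mathbf{X}_g,\mathbf{A}_g,\mathds{P}_g,Y_i(\boldsymbol{w}_g))$ then yield the claimed independence.

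In order, the steps are as follows.

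(i) Write the joint law within group $g$ under the two-step sampling. Given $L_g=l$, the individuals are drawn from the sub-population indexed by $l$, with exchangeable labels.

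(ii) Argue that the conditional law of the unit-level triples $(X_i,W_i,\{A_{ij}\}_j)$ given $L_g$ and given the unordered collection of triples is uniform over orderings. This makes the group's empirical distribution a sufficient statistic for $L_g$ with respect to the assignment mechanism. This is the exchangeability/de Finetti-type step used in \cite{arkhangelsky2024fixed}.

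(iii) Show that the collection of triples can be recovered from $\mathds{P}_g$ together with $(\mathbf{X}_g,\mathbf{A}_g)$. Since $q_i>0$ (because $\epsilon>0$ and $i\in\mathcal{N}_g(i,s)$), $\mathds{P}_g$ has the same support atoms as the unweighted empirical measure. Its weights are known functions of $\mathbf{A}_g$, so the unweighted empirical measure is a measurable function of $(\mathds{P}_g,\mathbf{A}_g)$.

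(iv) Conclude that conditioning on $\mathds{P}_g$ carries all of the information about $L_g$ that is relevant to the law of $\mathbf{W}_g$. Then finish with the balancing argument above.

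The main obstacle is step (ii)--(iv): justifying that $L_g$ affects treatment assignment only through the (weighted) empirical distribution. Under network dependence the units are not i.i.d. given $L_g$, so the classical sufficiency of the empirical measure is not automatic. My first attempt would be to impose, or read into Assumption~\ref{ass1}, that given $L_g$ the array $(X_i,W_i,A_{ij})$ is jointly exchangeable under node permutations. In that case the invariance argument for graphons applies, and $L_g$ enters only through permutation-invariant functionals of the array. $\mathds{P}_g$ is such a functional, but it is not a complete one, since it records only rows. So I expect to need to restrict to $s$-local assignment mechanisms, consistent with Assumption~\ref{ass3}, for which row-level neighborhood information suffices. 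Failing that, I would state the proposition under an explicit sufficiency condition.
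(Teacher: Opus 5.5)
There is a genuine gap, in two places.

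\textbf{The balancing-score step is set up incorrectly.} $\mathds{P}_g$ is built from the $W_i$'s, so it is not a pre-treatment balancing score, and your displayed identity cannot hold.
\begin{itemize}
\item Its right side is a function of $\mathbf{W}_g$; generically it equals the indicator $\mathbf{1}\{\mathbf{W}_g=\boldsymbol{w}\}$, as shown below.
\item Its left side is an ordinary propensity, nondegenerate whenever assignment is random.
\end{itemize}
For the same reason, your iterated-expectation step cannot replace $\Pr(\mathbf{W}_g=\boldsymbol{w}\mid\mathbf{X}_g,\mathbf{A}_g,\mathds{P}_g,Y_i(\boldsymbol{w}_g),L_g)$ by $\Pr(\mathbf{W}_g=\boldsymbol{w}\mid\mathbf{X}_g,\mathbf{A}_g,L_g)$. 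What you need instead is:
\begin{itemize}
\item[(a)] $\mathbf{W}_g\perp Y_i(\boldsymbol{w}_g)\mid\mathbf{X}_g,\mathbf{A}_g,\mathds{P}_g,L_g$. This follows from Assumption 2 by weak union, since $\mathds{P}_g$ is a fixed function of $(\mathbf{W}_g,\mathbf{X}_g,\mathbf{A}_g)$.
\item[(b)] $\Pr(\mathbf{W}_g=\boldsymbol{w}\mid\mathbf{X}_g,\mathbf{A}_g,\mathds{P}_g,L_g)$ does not depend on $L_g$.
\end{itemize}
Averaging over $L_g$ given $(\mathbf{X}_g,\mathbf{A}_g,\mathds{P}_g,Y_i(\boldsymbol{w}_g))$ then yields the claim.

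\textbf{Your plan for (b) does not close.} The ``uniform over orderings'' claim in (ii) is false for triples that carry labelled rows: relabelling units also permutes columns, and most reorderings produce a non-symmetric $\mathbf{A}_g$. You then end by proposing assumptions the paper does not make. The paper itself states the proposition by appeal to Arkhangelsky and Imbens, without further argument.

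\textbf{The missing observation is already in your step (iii).} To attach the weight $q_i$ to an atom, you match it to a unit through its labelled row $\{A_{ij}\}_j\in\{0,1\}^{N_g}$. That same matching reveals the unit's treatment.
\begin{itemize}
\item Given $(\mathbf{X}_g,\mathbf{A}_g)$, you can read off $W_i$ for every unit whose pair $(X_i,\{A_{ij}\}_j)$ is unique in the group.
\item Units sharing a pair have the same neighbour set, hence equal $q_i$, and $\mathds{P}_g$ reveals how many of them are treated.
\end{itemize}
So $(\mathbf{X}_g,\mathbf{A}_g,\mathds{P}_g)$ determines $\mathbf{W}_g$ up to permutations within these tie classes.

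Without ties (e.g.\ continuous $X_i$), $\mathbf{W}_g$ is measurable with respect to the conditioning variables, and the proposition holds trivially. With ties, (b) needs only one property: the law of $\mathbf{W}_g$ given $(\mathbf{X}_g,\mathbf{A}_g,L_g)$ must be invariant under swapping tied units. Such swaps are relabellings that fix $(\mathbf{X}_g,\mathbf{A}_g)$, and the random draw order of the two-step sampling supplies this invariance. The law of $\mathbf{W}_g$ given $(\mathbf{X}_g,\mathbf{A}_g,\mathds{P}_g,L_g)$ is then uniform over the admissible arrangements, and hence free of $L_g$. No graphon argument and no locality restriction is needed.

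\textbf{The flip side.} The propensity given $(\mathbf{X}_g,\mathbf{A}_g,\mathds{P}_g)$ is generically $0/1$, so the proposition is degenerate. In the i.i.d.\ setting one conditions only on the unit's own $X_i$; that is why exchangeability does real work there and not here.
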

Based on the core idea of the empirical measure proposed by \cite{arkhangelsky2024fixed}, this study extends its framework to network structures. This extension also exposes high-dimensional challenges. To make the approach operationally feasible, we replace the empirical distribution with a low-dimensional approximation. For this purpose, we impose structural constraints on the joint distribution and its relation to the covariates using an exponential family representation.
\begin{Assumption}[Exponential Family with Network Weights]\label{ass4}
	Conditional on \( L_g = l \), the distribution of \( (\mathbf{W}_g, \mathbf{X}_g, \mathbf{A}_g) \) belongs to an exponential family with a known sufficient statistic, its conditional density has the following form
	\[
	f_{\mathbf{W}_g, \mathbf{X}_g, \mathbf{A}_g \mid L_g}(\boldsymbol{w}_g, \boldsymbol{x}_g, \boldsymbol{a}_g \mid l) = h(\boldsymbol{w}_g, \boldsymbol{x}_g, \boldsymbol{a}_g) \exp\{\eta(l)^\top S(\boldsymbol{w}_g, \boldsymbol{x}_g, \boldsymbol{a}_g) + \eta_0(l)\}, 
	\]
	with potentially unknown carrier $h(\cdot)$. There exists a function \( \Phi: \mathcal{W} \times \mathcal{X} \times \{0,1\}^{N_g}  \to \mathds{R}^d \), such that the sufficient statistic is \(S(\boldsymbol{w}_g, \boldsymbol{x}_g, \boldsymbol{a}_g) = \sum_{i=1}^{N_g} \Phi\left(w_i, x_i, \{\mathbf{a}_{ij}\}_j, \mathcal{N}_g(i)\right),\) where \(\Phi\left(w_i, x_i, \mathbf{a}_g, \mathcal{N}_g(i)\right) = \Phi_s\left(w_i, x_i, \mathbf{a}_g^{(i,s)}\right).\) 
\end{Assumption}
Drawing on the general theory of exponential families (\cite{wainwright2008graphical}; \cite{barron1991approximation}), we assume that the joint distribution of \( (\mathbf{W}_g, \mathbf{X}_g, \mathbf{A}_g) \) conditional on $L_g$
can be approximated by an exponential family with sufficient statistic $S(\cdot)$. This extends the group-level balancing approach of \cite{arkhangelsky2024fixed} to network settings, where the sufficient statistic incorporates both nodal attributes and neighborhood aggregates.
Although Assumption \ref{ass4} imposes constraints on the joint distribution of \( (\mathbf{W}_g, \mathbf{X}_g, \mathbf{A}_g) \), it does not restrict the distribution of potential outcomes, thereby allowing the treatment effect \( \tau_i \) to be arbitrarily heterogeneous. Based on this, a latent group-level variable \( \bar{U}_g \equiv \eta(L_g) \) can be defined to fully capture the distributional information of \( (\mathbf{W}_g, \mathbf{X}_g, \mathbf{A}_g) \).
\begin{Lemma}\label{le1}
	Under Assumptions 1--4 we have
	\item[(a)]
	\[
	\{\mathbf{W}_g, \mathbf{X}_g, \mathbf{A}_g\} \perp\!\!\!\perp L_g \mid \bar{U}_g,
	\]
	\item[(b)]
	\[
	\mathbf{W}_g \perp\!\!\!\perp Y_i(\boldsymbol{w}_g) \mid \mathbf{X}_g, \mathbf{A}_g, \bar{U}_{g(i)}.
	\]
\end{Lemma}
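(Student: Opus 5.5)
The plan is to follow the argument of \cite{arkhangelsky2024fixed} and adapt it to the network-weighted sufficient statistic of Assumption~\ref{ass4}. Part (a) comes straight from the exponential-family form. Part (b) then follows from (a), Assumption~\ref{ass2}, and a conditioning argument that replaces $L_{g(i)}$ by the coarser variable $\bar U_{g(i)}=\eta(L_{g(i)})$.

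For part (a), fix $\bar U_g=u$. Under Assumption~\ref{ass4}, the conditional density of $(\mathbf{W}_g,\mathbf{X}_g,\mathbf{A}_g)$ given $L_g=l$ is
\[
h(\boldsymbol{w}_g,\boldsymbol{x}_g,\boldsymbol{a}_g)\exp\{\eta(l)^\top S(\boldsymbol{w}_g,\boldsymbol{x}_g,\boldsymbol{a}_g)+\eta_0(l)\}.
\]
The normalizing constant $\eta_0(l)$ is fixed by normalization, namely $\eta_0(l)=-\log\int h\exp\{\eta(l)^\top S\}$. Hence it is itself a function of $\eta(l)$, say $\eta_0(l)=\psi(\eta(l))$. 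The whole density therefore depends on $l$ only through $\eta(l)$. Two things need care here. First, the carrier $h$ and the form $S=\sum_i\Phi_s(\cdot)$ do not depend on $l$, and this holds for every group size $N_g$. Since $N_g$ is random, I would condition on it as well, or equivalently treat it as part of $\mathbf{A}_g$. Second, we need a regular conditional distribution of $L_g$ given $\bar U_g$. For any measurable set $B$ we then have
\[
\Pr\bigl((\mathbf{W}_g,\mathbf{X}_g,\mathbf{A}_g)\in B\mid L_g,\bar U_g\bigr)=\int_B h\,e^{\bar U_g^\top S+\psi(\bar U_g)},
\]
which is $\sigma(\bar U_g)$-measurable. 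This is precisely the conditional independence in (a).

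For part (b), Assumption~\ref{ass2} gives $\mathbf{W}_g\perp Y_i(\boldsymbol{w}_g)\mid \mathbf{X}_g,\mathbf{A}_g,L_{g(i)}$. Write $Z=(\mathbf{X}_g,\mathbf{A}_g)$. Since $\bar U$ is a function of $L$, conditioning on $(Z,L)$ is the same as conditioning on $(Z,L,\bar U)$. For bounded measurable $f$, I would compute
\[
\mathbb{E}[f(Y_i(\boldsymbol{w}_g))\mid \mathbf{W}_g,Z,\bar U]=\mathbb{E}\bigl[\mathbb{E}[f(Y_i(\boldsymbol{w}_g))\mid Z,L]\mid \mathbf{W}_g,Z,\bar U\bigr].
\]
The inner conditional expectation is a function $m(Z,L)$. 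Part (a) says $(\mathbf{W}_g,Z)\perp L\mid\bar U$, which implies $L\perp\mathbf{W}_g\mid Z,\bar U$. The outer expectation therefore equals $\mathbb{E}[m(Z,L)\mid Z,\bar U]$, which does not involve $\mathbf{W}_g$. That is the claimed conditional independence.

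The main obstacle is (b): the step from (a) to $L\perp\mathbf{W}_g\mid Z,\bar U$. It needs the weak-union and contraction properties of conditional independence, applied with measure-theoretic care, since $L_g$ is an abstract super-population label. A subtler point is that the potential outcomes $Y_i(\boldsymbol{w}_g)$ must not carry extra dependence on $\mathbf{W}_g$ beyond what $L$ induces. This is exactly what Assumption~\ref{ass2} grants, so I would state it explicitly as the bridge. I would also remark that Assumption~\ref{ass3} and the Proposition are not needed, beyond fixing the neighborhood radius $s$ used in $\Phi_s$.
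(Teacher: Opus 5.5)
Your proposal is correct and follows essentially the argument the paper relies on (that of \cite{arkhangelsky2024fixed}): (a) holds because the density in Assumption~\ref{ass4}, including the log-partition term $\eta_0(l)$, depends on $L_g$ only through $\eta(L_g)=\bar U_g$, and (b) follows by combining Assumption~\ref{ass2} with (a) via weak union and the tower-property (contraction) step you describe. The one point to tidy is the group size: conditioning on $N_g$ is not equivalent to treating it as part of $\mathbf{A}_g$, because conditioning would only give $(\mathbf{W}_g,\mathbf{X}_g,\mathbf{A}_g)\perp L_g\mid \bar U_g, N_g$, so read Assumption~\ref{ass4} as a density over all group sizes, which makes the law of $N_g$ depend on $L_g$ only through $\bar U_g$ (for (b) this is moot, since $N_g$ is a function of $\mathbf{A}_g$).
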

This lemma shows that we can use the group-level characteristics \( \bar{U}_g \) to replace the group labels \( L_g \), and thus we only need to focus on the group-level characteristics \( \bar{U}_g \). However, it is not possible to ascertain the true value of the group-level parameter \( \bar{U}_g \), nor to obtain a consistent estimate of it \citep{NeymanScott1948}. Drawing on the approach of \cite{arkhangelsky2024fixed}, this paper uses the sufficient statistic \( \overline{S}_g \equiv S(\mathbf{W}_g, \mathbf{X}_g, \mathbf{A}_g)/N_g \) from the exponential family. This statistic serves the role of a balancing statistic analogous to \( \overline{U}_g \).
\begin{Theorem}\label{th1}
	Under Assumptions 1--4 we have
	\[
	\mathbf{W}_g \perp Y_i(\boldsymbol{w}_g) \mid \mathbf{X}_g, \mathbf{A}_g, \overline{S}_{g(i)}.
	\]
\end{Theorem}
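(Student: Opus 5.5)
The plan mirrors the argument of \cite{arkhangelsky2024fixed}, with Lemma~\ref{le1} doing most of the work. The key observation is that $\overline{S}_g = S(\mathbf{W}_g,\mathbf{X}_g,\mathbf{A}_g)/N_g$ is a deterministic function of $(\mathbf{W}_g,\mathbf{X}_g,\mathbf{A}_g)$; $N_g$ is itself determined by the dimension of these arrays. Conditioning on $(\mathbf{X}_g,\mathbf{A}_g,\overline{S}_g)$ is therefore conditioning on $(\mathbf{X}_g,\mathbf{A}_g)$ plus one extra function of $\mathbf{W}_g$. We need to show that this extra information does not break unconfoundedness. We do this by passing through the latent $\bar U_g = \eta(L_g)$, for which Lemma~\ref{le1}(b) already gives ignorability.

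\textbf{Step 1 (a balancing property for $\overline{S}_g$).} Using Assumption~\ref{ass4} we show
\[
\mathbf{W}_g \perp\!\!\!\perp \bar U_g \mid \mathbf{X}_g,\mathbf{A}_g,\overline{S}_g .
\]
Write the conditional density of $\mathbf{W}_g$ given $(\mathbf{X}_g,\mathbf{A}_g,\overline{S}_g=\bar s,\bar U_g=u)$. On the fibre $\{\boldsymbol w: S(\boldsymbol w,\boldsymbol x,\boldsymbol a)/N_g=\bar s\}$ the exponential factor $\exp\{u^\top N_g\bar s+\eta_0\}$ is constant. It therefore cancels in the normalization, leaving a density proportional to the carrier $h(\boldsymbol w,\boldsymbol x,\boldsymbol a)$, which does not depend on $u$. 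This is the standard sufficiency argument. The additive structure $S=\sum_i\Phi_s(\cdot)$ is not needed here, only that $S$ is a known function. Some care is required in the cancellation when $W$ is discrete and $X$ is continuous; the argument can be phrased with densities relative to a product dominating measure.

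\textbf{Step 2 (combine with Lemma~\ref{le1}(b)).} Fix $\boldsymbol w_g$. For any bounded measurable $f$, we compute $E[f(Y_i(\boldsymbol w_g))\mid \mathbf{W}_g,\mathbf{X}_g,\mathbf{A}_g,\overline{S}_g]$ by iterated expectations over $\bar U_g$. Since $\overline{S}_g$ is a function of $(\mathbf{W}_g,\mathbf{X}_g,\mathbf{A}_g)$, Lemma~\ref{le1}(b) gives
\[
E[f(Y_i(\boldsymbol w_g))\mid \mathbf{W}_g,\mathbf{X}_g,\mathbf{A}_g,\overline{S}_g,\bar U_g]=\psi(\mathbf{X}_g,\mathbf{A}_g,\bar U_g),
\]
where $\psi$ is free of $\mathbf{W}_g$. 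We then integrate over $\bar U_g$ given $(\mathbf{W}_g,\mathbf{X}_g,\mathbf{A}_g,\overline{S}_g)$. By Step 1, this conditional law equals the law of $\bar U_g$ given $(\mathbf{X}_g,\mathbf{A}_g,\overline{S}_g)$ alone. Hence the conditional expectation is a function of $(\mathbf{X}_g,\mathbf{A}_g,\overline{S}_g)$ only, which is exactly the claimed conditional independence.

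\textbf{Main obstacle.} The delicate point is Lemma~\ref{le1}(b) as an input. It conditions on $(\mathbf{X}_g,\mathbf{A}_g,\bar U_g)$ but not on $\overline{S}_g$, whereas Step 2 needs the version that also conditions on $\overline{S}_g$. Because $\overline{S}_g$ depends on $\mathbf{W}_g$, this requires more than a trivial restatement. The justification is that $\overline{S}_g$ is measurable with respect to $\sigma(\mathbf{W}_g,\mathbf{X}_g,\mathbf{A}_g)$. Lemma~\ref{le1}(b) states that, given $(\mathbf{X}_g,\mathbf{A}_g,\bar U_g)$, the potential outcome is independent of the whole vector $\mathbf{W}_g$, and therefore independent of $(\mathbf{W}_g,\overline{S}_g)$ jointly. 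Conditioning additionally on a function of $\mathbf{W}_g$ then leaves the conditional mean of $f(Y_i(\boldsymbol w_g))$ unchanged. I would state this weak-union step explicitly. I would also verify that $N_g$, which is random under Assumption~\ref{ass1}, is treated as part of the conditioning set, since $\overline{S}_g$ involves division by it.
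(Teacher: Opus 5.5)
Your proposal is correct and takes essentially the same route as the paper, which adapts the argument of \cite{arkhangelsky2024fixed}: sufficiency of $\overline{S}_g$ under Assumption~\ref{ass4} gives the balancing property $\mathbf{W}_g \perp\!\!\!\perp \bar U_g \mid \mathbf{X}_g,\mathbf{A}_g,\overline{S}_g$, and combining this with Lemma~\ref{le1}(b), weak union (since $\overline{S}_g$ is a function of $(\mathbf{W}_g,\mathbf{X}_g,\mathbf{A}_g)$) and iterated expectations over $\bar U_g$ yields the claim. Your bookkeeping for the random $N_g$ is also right: it is recovered from the dimension of $\mathbf{X}_g$, so conditioning on $\overline{S}_g$ together with $\mathbf{X}_g$ is the same as conditioning on $S$.
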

Theorem \ref{th1} reduces the high-dimensional objects \( \mathds{P}_{g(i)} \) to a low-dimensional average \( \overline{S}_{g(i)} \) derived from sufficient statistics, enabling cross-group confounding control under low-dimensional conditions.

To define causal effects under network interference, we first introduce the exposure mapping for each individual unit $i$ in group $g$. Formally, an exposure mapping is defined as $T_i = f_g(\mathbf{W}_g, \mathbf{A}_g)$, where $f_g$ denotes a sequence of measurable functions with codomain $\mathcal{T}$, the set of all possible exposure values for units in the group. This mapping converts the global treatment assignment vector $\mathbf{W}_g$ and the group's network adjacency matrix $\mathbf{A}_g$ into a scalar exposure value $T_i$ for unit $i$. It captures the combined effective treatment influence on unit $i$ from its own treatment status and that of its network-connected neighbors. This paper focuses on causal estimands defined through exposure mappings.

Define the generalized propensity score \citep{imbens2000role}
\[
p_t(\boldsymbol{x}_g, \boldsymbol{a}_g, s_i) = \mathds{P}(T_i = t \mid \mathbf{X}_g = \boldsymbol{x}_g, \mathbf{A}_g = \boldsymbol{a}_g, \overline{S}_{g(i)} = s_i).
\]
Depending on the value of the balancing score, the propensity score \( p_t(\boldsymbol{x}_g, \boldsymbol{a}_g, s_i) \) might be equal to zero or one for certain values of \( s \). We cannot expect the overlap assumption \citep{imbens2015causal} to hold. Instead, we are making the following assumption.
\begin{Assumption}[Known overlap]\label{ass5}
	For a known set \( \mathds{B} \) with \( \mathds{P}( \mathbf{X}_g, \mathbf{A}_g, \overline{S}_{g(i)} \in \mathds{B}) > 0 \) there exists \( \eta > 0 \), such that $(a)$ for any \( (\boldsymbol{x}_g, \boldsymbol{a}_g, s_i) \in \mathds{B} \) we have \( \eta < p_t(\boldsymbol{x}_g, \boldsymbol{a}_g, s_i) < 1 - \eta \) and $(b)$ for \( (\boldsymbol{x}_g, \boldsymbol{a}_g, s_i) \notin \mathds{B} \), \( p_t(\boldsymbol{x}_g, \boldsymbol{a}_g, s_i) \in \{0, 1\} \).
\end{Assumption}
The purpose of set $\mathds{B}$ is to select valid samples with non-extreme propensity scores, which ensures that the causal effect estimates are reliable. The set $\mathds{B}$ in the method is uniquely and stably identified empirically through a thresholding method based on GNN-treated propensity scores, where the threshold $\eta$ is determined by the characteristics of the actual data. This assumption has two key aspects. First, the propensity score $p_t(\boldsymbol{x}_g, \boldsymbol{a}_g, s_i)$ is well-behaved and valid only for samples that fall within this set $\mathds{B}$. Second, the assumption requires that researchers can explicitly identify the specific boundaries of $\mathds{B}$ from the observed data using the method described \citep{arkhangelsky2024fixed,crump2009dealing}.

We now define the target estimands and focus on two of them. The first estimand targets the average treatment effect (\textit{ATE}) for a subpopulation within the overall population that meets specific conditions, expressed as
\begin{equation}
	\tau_{(t,t')} \equiv \mathds{E}\left[ Y_i(t) \mid  (\mathbf{X}_g, \mathbf{A}_g, \overline{S}_{g(i)}) \in \mathds{B} \right] - \mathds{E}\left[ Y_i(t') \mid  (\mathbf{X}_g, \mathbf{A}_g, \overline{S}_{g(i)}) \in \mathds{B} \right].\label{eq3}
\end{equation}
The second estimand is a sample-weighted version of equation \eqref{eq3}. We first define the conditional treatment effect for a single unit in the sample as
\[
\tilde{\tau}_{i(t,t')} \equiv \mathds{E}\left[ Y_i(t) \mid \mathbf{X}_g, \mathbf{A}_g, \overline{S}_{g(i)} \right] - \mathds{E}\left[ Y_i(t')  \mid  \mathbf{X}_g, \mathbf{A}_g, \overline{S}_{g(i)} \right].
\]
Then, for the units in the sample that satisfy \( (\mathbf{X}_g, \mathbf{A}_g, \overline{S}_{g(i)}) \in \mathds{B} \), a weighted average of \( \tilde{\tau}_{i(t,t')} \) is taken using the number of such units in the sample as the weight, defined as
\[
\tilde{\tau}_{(t,t')} \equiv \frac{\sum_{i=1}^N \mathbf{1}_{(\mathbf{X}_g, \mathbf{A}_g, \overline{S}_{g(i)}) \in \mathds{B}} \tilde{\tau}_{i(t,t')}}{\sum_{i=1}^N \mathbf{1}_{(\mathbf{X}_g, \mathbf{A}_g, \overline{S}_{g(i)}) \in \mathds{B}}}.
\]
\begin{Corollary}
	Suppose Assumptions \ref{ass1}--\ref{ass5} hold. Then \( \tau_{(t,t')} \) is identified.
\end{Corollary}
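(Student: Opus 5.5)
The plan follows the standard route for identification under unconfoundedness plus overlap, using the balancing statistic from Theorem~\ref{th1} in place of the unobserved group label. Write $Z_i \equiv (\mathbf{X}_g, \mathbf{A}_g, \overline{S}_{g(i)})$ for the conditioning variables of unit $i$. The goal is to express $\mathds{E}[Y_i(t)\mid Z_i\in\mathds{B}]$ as a functional of the joint distribution of the observables $(Y_i, T_i, Z_i)$, and similarly for $t'$. The difference $\tau_{(t,t')}$ is then identified.

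\textbf{Step 1: reduce unconfoundedness to the exposure.} Since $T_i = f_g(\mathbf{W}_g,\mathbf{A}_g)$ is a measurable function of $\mathbf{W}_g$ and of $\mathbf{A}_g$, which is already conditioned on, Theorem~\ref{th1} gives $T_i \perp Y_i(t)\mid Z_i$. Here $Y_i(t)$ is the potential outcome indexed by exposure. I would take the implicit exposure-mapping consistency for granted: $Y_i(\boldsymbol{w}_g)=Y_i(t)$ whenever $f_g(\boldsymbol{w}_g,\mathbf{A}_g)=t$, so that $Y_i(t)$ is a function of $\{Y_i(\boldsymbol{w}_g)\}$ and $\mathbf{A}_g$. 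Independence of $\mathbf{W}_g$ from the whole collection given $Z_i$ then transfers to $T_i$. I expect this to be the main subtle point. Theorem~\ref{th1} is stated for each fixed $\boldsymbol{w}_g$, so I would read it as joint independence from the potential-outcome vector, as in \cite{arkhangelsky2024fixed}. Alternatively, I would argue that, since $T_i=t$ is an event in $\mathbf{W}_g$ given $\mathbf{A}_g$, only the marginal statement for the single potential outcome $Y_i(t)$ is needed.

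\textbf{Step 2: pointwise identification on $\mathds{B}$.} For $z\in\mathds{B}$, Assumption~\ref{ass5}(a) gives $p_t(z)>\eta>0$, so conditioning on $\{T_i=t, Z_i=z\}$ is well defined. Combining Step 1 with consistency ($Y_i=Y_i(t)$ on $\{T_i=t\}$) yields
\[
\mathds{E}[Y_i(t)\mid Z_i=z]=\mathds{E}[Y_i(t)\mid T_i=t, Z_i=z]=\mathds{E}[Y_i\mid T_i=t, Z_i=z]\equiv \mu_t(z).
\]

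\textbf{Step 3: integrate over $\mathds{B}$.} Because $\mathds{B}$ is known and $\mathds{P}(Z_i\in\mathds{B})>0$, the law of iterated expectations gives
\[
\mathds{E}[Y_i(t)\mid Z_i\in\mathds{B}]=\frac{\mathds{E}[\mu_t(Z_i)\mathbf{1}\{Z_i\in\mathds{B}\}]}{\mathds{P}(Z_i\in\mathds{B})}.
\]
Both the numerator and the denominator depend only on observables. I would also record the equivalent inverse-weighting form $\mathds{E}[\mathbf{1}\{T_i=t\}Y_i\mathbf{1}\{Z_i\in\mathds{B}\}/p_t(Z_i)]/\mathds{P}(Z_i\in\mathds{B})$, which is valid because $p_t$ is bounded away from zero on $\mathds{B}$; this form will be used later for the doubly robust estimator.

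\textbf{Step 4: take the difference.} Subtracting the $t'$ version gives $\tau_{(t,t')}$. Under Assumption~\ref{ass1}, groups are i.i.d. draws, so these population expectations are well defined for a generic unit, and no restriction on outcomes beyond Assumption~\ref{ass3} is needed for identification.
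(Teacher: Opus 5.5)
Your proposal is correct and is essentially the paper's argument: the paper simply asserts the inverse-weighting identity $\tau_{(t,t')}=\mathds{E}\bigl[(\mathbf{1}\{T_i=t\}/p_t-\mathbf{1}\{T_i=t'\}/p_{t'})Y_i \mid Z_i\in\mathds{B}\bigr]$. Your Steps 1--3 supply the justification it leaves implicit: the transfer of Theorem~\ref{th1} to $T_i$ via exposure-mapping consistency, overlap on $\mathds{B}$, and the regression form $\mathds{E}[\mu_t(Z_i)\mid Z_i\in\mathds{B}]$, which equals the weighting form by iterated expectations. One small correction to your closing remark: Assumption~\ref{ass3} plays no role in identification, and the only extra requirement tacitly needed is integrability of $Y_i(t)$, so that the estimand exists.
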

\begin{proof}
	The results follow from the identity
	\[
	\begin{aligned}
		\tau_{(t,t')}
		&= \mathds{E}\left[ Y_i(t) \mid  (\mathbf{X}_g, \mathbf{A}_g, \overline{S}_{g(i)}) \in \mathds{B} \right] - \mathds{E}\left[ Y_i(t') \mid  (\mathbf{X}_g, \mathbf{A}_g, \overline{S}_{g(i)}) \in \mathds{B} \right]\\
		&= \mathds{E}\left[ (\frac{\mathbf{1}\{T_i = t\}}{p_t(\mathbf{X}_g, \mathbf{A}_g, \overline{S}_{g(i)})} - \frac{\mathbf{1}\{T_i = t'\}}{p_{t'}(\mathbf{X}_g, \mathbf{A}_g, \overline{S}_{g(i)})}) Y_i \bigg| (\mathbf{X}_g, \mathbf{A}_g, \overline{S}_{g(i)}) \in \mathds{B} \right].
	\end{aligned}
	\]
\end{proof}
\subsection{Estimation}\label{subsec3}
Define the generalized propensity score and outcome regression, respectively, as
\begin{align}
	p_t(\mathbf{X}_g, \mathbf{A}_g, \overline{S}_{g(i)}) &= \mathds{P}(T_i = t \mid \mathbf{X}_g, \mathbf{A}_g, \overline{S}_{g(i)}), \notag \\
	\mu_t(\mathbf{X}_g, \mathbf{A}_g, \overline{S}_{g(i)}) &= \mathds{E}[Y_i \mid T_i = t, \mathbf{X}_g, \mathbf{A}_g, \overline{S}_{g(i)}]. \tag{7}
\end{align}
We use \( \hat{p}_t(\mathbf{X}_g, \mathbf{A}_g, \overline{S}_{g(i)}) \) and \( \hat{\mu}_t(\mathbf{X}_g, \mathbf{A}_g, \overline{S}_{g(i)}) \) as the corresponding \textit{GNN} estimators, as described in section \ref{subsec4}. Define indicator variable \( B_i \equiv \mathbf{1}_{(\mathbf{X}_g, \mathbf{A}_g, \overline{S}_{g(i)}) \in \mathds{B}} \) and let \( \bar{B} \equiv \frac{1}{M} \sum_{g=1}^M \frac{1}{N_g} \sum_{i: g(i)=g} B_i \) be the estimate of the share of units for which we have overlap. A standard doubly robust estimator for multi-valued treatments takes the form
\begin{align}
	\hat{\tau}_i(t, t') &= \frac{\mathbf{1}\{T_i = t\}(Y_i - \hat{\mu}_t(\mathbf{X}_g, \mathbf{A}_g, \overline{S}_{g(i)}))}{\hat{p}_t(\mathbf{X}_g, \mathbf{A}_g, \overline{S}_{g(i)})} + \hat{\mu}_t(\mathbf{X}_g, \mathbf{A}_g, \overline{S}_{g(i)}) \notag \\
	& \quad - \frac{\mathbf{1}\{T_i = t'\}(Y_i - \hat{\mu}_{t'}(\mathbf{X}_g, \mathbf{A}_g, \overline{S}_{g(i)}))}{\hat{p}_{t'}(\mathbf{X}_g, \mathbf{A}_g, \overline{S}_{g(i)})} - \hat{\mu}_{t'}(\mathbf{X}_g, \mathbf{A}_g, \overline{S}_{g(i)}), 
\end{align}
and its group-level aggregate
\[
\hat{\tau}_g(t, t') = \frac{1}{N_g} \sum_{i: g(i)=g} B_i \hat{\tau}_i(t, t'). 
\]
We define our proposed \textit{GME\text{-}GNN}
\[
\hat{\tau}_{\textit{GME\text{-}GNN}}(t, t') = \frac{1}{M} \sum_{g=1}^M \hat{\tau}_g(t, t'). 
\]

This integrated framework is named the \textit{GME\text{-}GNN} estimator, where \textit{GNN} specifically refers to the implemented twisted \textit{GNN} to address network confounding and nonlinear dependencies. In the context of network data, the message-passing mechanism of \textit{GNN} can efficiently capture dependencies between nodes and higher-order network effects. Therefore, it is adopted in this paper as the specific implementation approach (see Section \ref{subsec4} for details).

For asymptotic variance estimation, we employ the network \textit{HAC} estimator by \cite{leung2022graph}, extending the framework of \cite{kojevnikov2021limit}.
\begin{align}
	\hat{\sigma}^2 = \frac{1}{M} \sum_{g=1}^M \frac{1}{N_g} \sum_{i \in \mathcal{N}_g} &\sum_{j \in \mathcal{N}_g} B_i \left( \hat{\tau}_i(t, t') - \hat{\tau}_{\textit{GME\text{-}GNN}}(t, t') \right) \notag\\
	& \times B_j \left( \hat{\tau}_j(t, t') - \hat{\tau}_{\textit{GME\text{-}GNN}}(t, t') \right) \mathbf{1} \left\{ \ell_{\mathcal{A}}(i, j) \leqslant b_{g} \right\}. 
\end{align}
This estimator computes covariances only for node pairs within a specific bandwidth \( b_{n_g} \), effectively discounting dependencies between distant nodes. The bandwidth \( b_{n_g} \) thus serves as a crucial tuning parameter that governs the bias-variance trade-off.
\[
b_{g} = \left\lceil \tilde{b}_{g} \right\rceil, \quad \text{for } \tilde{b}_{g} =
\begin{cases}
	\frac{1}{4} \mathcal{L}(\mathbf{A}_g), & \text{if } \mathcal{L}(\mathbf{A}_g) < 2 \cdot \frac{\log N_g}{\log \delta(\mathbf{A}_g)} ,\\
	\mathcal{L}(\mathbf{A}_g)^{1/4}, & \text{otherwise},
\end{cases} 
\]
where \( \lceil \cdot \rceil \) rounds up to the nearest integer, \( \delta(\mathbf{A}_g) = N_g^{-1} \sum_{i,j} A_{ij} \) is the average degree, and \( \mathcal{L}(\mathbf{A}_g) \) is the average path length. This is similar to the proposal of \cite{leung2022graph}.
\subsection{Construction of Sufficient Statistics Based on Exponential Families}\label{subsec5}

Theorem \ref{th1} shows that the group-level balancing statistic $\overline{S}_g = S(\mathbf{W}_g, \mathbf{X}_g, \mathbf{A}_g)/N_g$ is sufficient to control for group-level confounding. To operationalize this result, this section provides a specific construction of $\overline{S}_g$ based on the theory of exponential families \citep{wainwright2008graphical}. This construction operationalizes Assumption \ref{ass4} by specifying a concrete form of $\bar{\Phi}(\cdot)$  that remains low-dimensional while encoding both nodal attributes and first-order neighborhood aggregates. Higher-order dependencies are captured through the \textit{GNN} estimation framework in Section \ref{subsec4}, rather than through the sufficient statistic itself.

First, we divide the individual-level information within a group into self-information and network-dependent information.
The self-information includes the treatment status $\Phi'(w_i) = w_i$ and the covariate $\Phi''(x_i) = x_i$.
The network-dependent information includes the neighboring treatment status
\[
\Phi_{(s)}^{'''}(w_i,\{a_{ij}\}_j,\mathcal{N}_g(i)) = \sum_{j \in \mathcal{N}_g(i,s)} a_{ij}^{(s)} w_j,
\]
and the neighboring covariate
\[
\Phi''''(x_i,\{a_{ij}\}_j,\mathcal{N}_g(i)) = \sum_{j \in \mathcal{N}_g(i)} a_{ij} x_j.
\]
We construct the local statistical vector $\Phi(\cdot)$ for individual $i$ to satisfy $\Phi(w_i, x_i, \mathbf{a}_g, \mathcal{N}_g(i)) = \Phi_s(w_i, x_i, \mathbf{a}_g^{(i,s)})$. Its explicit form is 
\[
\Phi(w_i,x_i,\{a_{ij}\}_j,\mathcal{N}_g(i)) =
\begin{bmatrix}
	\Phi'(w_i) \\[2pt]
	\Phi''(x_i) \\[2pt]
	\Phi'''(w_i, \{a_{ij}\}_j, \mathcal{N}_{g(i)}) \\[2pt]
	\Phi''''(x_i, \{a_{ij}\}_j, \mathcal{N}_{g(i)})
\end{bmatrix}
=
\begin{bmatrix}
	w_i \\[2pt]
	x_i \\[2pt]
	\sum_{j \in \mathcal{N}_g(i)} a_{ij} w_j \\[2pt]
	\sum_{j \in \mathcal{N}_g(i)} a_{ij} x_j
\end{bmatrix}.
\]
Based on the additive structure of the global sufficient statistic as the sum of individual local statistics ($S = \sum_{i=1}^{N_g} \Phi(\cdot)$) in the assumption, we sum the local statistic vectors of all individuals by component dimensions to obtain the global sufficient statistic
\[
S(\mathbf{W}_g,\mathbf{X}_g,\mathbf{A}_g) = \sum_{i:g(i)=g} \Phi(w_i,x_i,\{a_{ij}\}_j,\mathcal{N}_g(i))
= \begin{bmatrix}
	\sum_{i=1}^{N_g} w_i \\[4pt]
	\sum_{i=1}^{N_g} x_i \\[4pt]
	\sum_{i=1}^{N_g} \sum_{j \in \mathcal{N}_g(i)} a_{ij} w_j \\[4pt]
	\sum_{i=1}^{N_g} \sum_{j \in \mathcal{N}_g(i)} a_{ij} x_j
\end{bmatrix}.
\]
According to Theorem~\ref{th1}, we divide the global sufficient statistic by the group size $N_g$ to obtain the balancing statistic
\(\overline{S}_g(i) = \frac{1}{N_g} S(\mathbf{W}_g,\mathbf{X}_g,\mathbf{A}_g) = \begin {bmatrix}
\overline {W}_g,
\overline {X}_g,
\overline {A_g W_g},
\overline {A_g X_g}
\end {bmatrix}^\top\).
	This group-mean form ensures the comparability of statistics across different group sizes and directly satisfies the conditions in Theorem~\ref{th1}.
	
	\subsection{Graph Neural Networks}\label{subsec4}
	The estimator proposed in this study is based on a doubly robust causal inference framework, which is not inherently tied to any specific model choice. Theoretically, the estimators for the propensity score  \( \hat{p}_t(\mathbf{X}_g, \mathbf{A}_g, \overline{S}_{g(i)}) \) and the outcome regression \( \hat{\mu}_t(\mathbf{X}_g, \mathbf{A}_g, \overline{S}_{g(i)}) \)  can be replaced with other parametric or nonparametric models. In network-based causal inference, accurately estimating these two components requires capturing group-level heterogeneity, dependencies in network topology, and higher-order neighborhood effects simultaneously. However, traditional estimation methods often struggle to adequately meet these demands. Specifically, traditional parametric models typically assume a linear relationship between covariates and the treatment/outcome, which cannot adapt to the complex nonlinear dependencies present in network data. While general-purpose machine learning models can handle nonlinear problems, they lack inherent compatibility with network structure. This makes it difficult for them to characterize the dependencies between nodes and higher-order network effects. Furthermore, existing methods for network control often rely on low-dimensional summary statistics to account for confounding. This approach overlooks the high-dimensional confounding problems introduced by higher-order neighborhood attributes and complex network topology, resulting in insufficient control of confounding factors.
	
	Theoretically, if other models are to be used for estimating the propensity score and outcome regression, they must satisfy the three core requirements we have defined for the network confounding scenario. First, the model must possess permutation invariance. This means that for nodes with identical positions and connection structures in the network, the model's predictions should remain consistent regardless of any changes to their labels or identifiers. Second, the model must adapt to the local characteristics of network confounding. It should effectively capture higher order network confounding effects through a limited neighborhood around each node, thereby satisfying network approximate sparsity. Third, the model must achieve a sufficient convergence rate. It needs to meet the \textit{GNN} convergence rate standard set in Assumption \ref{ass7} of the paper. This is necessary to provide theoretical support for the asymptotic normality and variance consistency of the estimation results. 
	
	In contrast, Graph Neural Networks (\textit{GNNs}) leverage a message-passing mechanism that adaptively aggregates information across layers through iterative updates. This enables them to efficiently capture nonlinear dependencies and higher-order interactive effects within network structures, making \textit{GNNs} highly suitable for the research scenario in this paper. Moreover, \textit{GNNs} satisfy all three requirements outlined above \cite{leung2022graph}. Therefore, this paper adopts \textit{GNNs} as the estimation framework for both the generalized propensity score and the outcome regression.
	
	To implement the estimation framework, we employ \textit{GNNs}, which utilize a message-passing mechanism to update node embeddings through layer-wise aggregation of neighborhood information.
	The embedding of a node at layer \( l \) is constituted by its historical state and neighborhood states. Specifically, the embedding \(h_i^{(l)}\) of node \( i \) is generated from two input sources
	\[h_i^{(l)} = \Phi_{0l}\left(h_i^{(l-1)}, \Phi_{1l}\left(h_i^{(l-1)}, \{h_j^{(l-1)}: A_{ij} = 1\}\right)\right),\]
	where \( \Phi_{0l}(\cdot) \) and \( \Phi_{1l}(\cdot) \) are parameterized vector-valued functions. We initialize the process with \( h_i^{(0)} = \mathbf{X}_i \) at the input layer, meaning that the initial node embedding incorporates no network information. The depth \( L \) of the \( L \)-neighborhood of the unit is used for prediction.
	
	Let \( \mathcal{F}_{\text{GNN}}(L) \) denote all \( L \)-layer GNNs with arbitrary \( \Phi_{0l}, \Phi_{1l} \). For any \( f \in \mathcal{F}_{\text{GNN}}(L) \), the output \( f(\mathbf{X}_g, \mathbf{A}_g, \overline{S}_{g(i)}) \) corresponds to \( \mathbf{h}_i^{(L)} \). The GNN estimator minimizes empirical risk
	\[
	\hat{f}_{\text{GNN}} \in \arg\min_{f \in \mathcal{F}_{\text{GNN}}(L)} \sum_{i:g(i)=g} \ell(Y_i, f(\mathbf{X}_g, \mathbf{A}_g, \overline{S}_{g(i)})).
	\]
	
	Loss functions are specified by outcome type. Continuous outcomes (\( Y_i \in \mathds{R} \)) use squared loss \( \ell = \frac{1}{2}(Y_i - f(\cdot))^2 \), outputting conditional expectation
	\[
	f^*(\mathbf{X}_g, \mathbf{A}_g, \overline{S}_{g(i)}) = \left(\mathds{E}[Y_i \mid \mathbf{X}_g, \mathbf{A}_g, \overline{S}_{g(i)}]\right)_{i:g(i)=g}^{N_g}.
	\]
	Binary outcomes (\( Y_i \in \{0,1\} \)) adopt logistic loss \( \ell = -Y_i f(\cdot) + \log(1 + \exp(f(\cdot))) \), estimating log-odds
	\[
	f^*(\mathbf{X}_g, \mathbf{A}_g, \overline{S}_{g(i)}) = \left(\log\left(\frac{\mathds{E}[Y_i \mid \mathbf{X}_g, \mathbf{A}_g, \overline{S}_{g(i)}]}{1 - \mathds{E}[Y_i \mid \mathbf{X}_g, \mathbf{A}_g, \overline{S}_{g(i)}]}\right)\right)_{i:g(i)=g}^{N_g}.
	\]
	
	To implement the doubly robust estimator from Section \ref{subsec3}, we estimate the propensity scores by substituting \(Y_i\) with \(\mathbf{1}\{T_i = t\}\) and minimizing the logistic loss, resulting in
	\[
	\hat{p}_t(\mathbf{X}_g, \mathbf{A}_g, \overline{S}_{g(i)}) = \frac{\exp(\hat{f}_{\text{GNN},p}(\mathbf{X}_g, \mathbf{A}_g, \overline{S}_{g(i)}))}{1 + \exp(\hat{f}_{\text{GNN},p}(\mathbf{X}_g, \mathbf{A}_g, \overline{S}_{g(i)}))}.
	\]
	For outcome regression with \(\mathds{R}\)-valued outcomes, we apply the squared-error loss exclusively to units where \(T_i = t\), obtaining the estimator
	\[
	\hat{\mu}_t(\mathbf{X}_g, \mathbf{A}_g, \overline{S}_{g(i)}) = \hat{f}_{\text{GNN},\mu}(\mathbf{X}_g, \mathbf{A}_g, \overline{S}_{g(i)}).
	\]
	
	Both $\hat{f}_{\text{GNN},\mu}$ and $\hat{f}_{\text{GNN},p}$ belong to the \text{GNN} function class $\mathcal{F}_{\text{GNN}}(L)$ defined in this section, but they are trained separately with different loss functions and independent parameters. For notational simplicity in the subsequent sections, we collectively refer to them as $\hat{f}_{\text{GNN}}$ when the context is clear.
	\section{Asymptotic Theory}\label{Theory}
	This section establishes the asymptotic properties of the \textit{GME\text{-}GNN} estimator. We consider an asymptotic sequence where the number of groups \( M \to \infty \) and the number of units within each group \( N_g \to \infty \) . Under this framework, the model primitives \( (f_{g}, \mu_{g}) \) and network structure \( \mathbf{A}_g \) are allowed to exhibit heterogeneity across different groups. Define the influence function for each unit as
	\begin{align}
		\varphi_{t,t'}(i) =& \frac{\mathbf{1}\{T_i = t\}(Y_i - \mu_t(\mathbf{X}_g, \mathbf{A}_g, \overline{S}_{g(i)}))}{p_t(\mathbf{X}_g, \mathbf{A}_g, \overline{S}_{g(i)})} + \mu_t(\mathbf{X}_g, \mathbf{A}_g, \overline{S}_{g(i)}) \notag\\
		& - \frac{\mathbf{1}\{T_i = t'\}(Y_i - \mu_{t'}(\mathbf{X}_g, \mathbf{A}_g, \overline{S}_{g(i)}))}{p_{t'}(\mathbf{X}_g, \mathbf{A}_g, \overline{S}_{g(i)})} - \mu_{t'}(\mathbf{X}_g, \mathbf{A}_g, \overline{S}_{g(i)}) - \tau(t, t'). 
	\end{align}
	The group-average influence function is defined as \( \varphi_{t,t'}(g) = \frac{1}{N_g} \sum_{i: g(i)=g} B_i \varphi_{t,t'}(i) \). The asymptotic variance is given by
	\[
	\sigma^2 = \frac{1}{M} \frac{1}{\bar{B}^2} \sum_{g=1}^M \text{Var} \left( \frac{1}{\sqrt{N_g}} \sum_{i: g(i)=g} B_i \varphi_{t,t'}(i) \,\bigg|\, \mathbf{X}_g, \mathbf{A}_g, \overline{S}_{g(i)} \right).
	\]
	\begin{Assumption}[Moments]\label{ass6}
		\( (a) \) There exists \( C < \infty \) and \( p > 4 \) such that for an arbitrary individual \( i \) in any group, and \( \boldsymbol{w}_g \in \{0,1\}^{N_g} \), \( \mathds{E}[|Y_i(\boldsymbol{w}_g)|^p \mid \mathbf{X}_g, \mathbf{A}_g, \overline{S}_{g(i)}] < C \) a.s. 
		\( (b) \) \( \frac{\eta}{2} < \hat{p}_t(\mathbf{X}_g, \mathbf{A}_g, \overline{S}_{g(i)}) < 1 - \frac{\eta}{2} \) on \( \mathds{B} \).
		\( (c) \)$\liminf_{\substack{M \to \infty \\ N_g \to \infty\ \text{for all } g}} \sigma^2 > 0\ \text{a.s.}$
	\end{Assumption}
	Part \( (a) \) uniform boundedness of the \( p \)-th moment for potential outcomes establishes the foundation for central limit theorem (\textit{CLT}). Part \( (b) \) requires sufficient overlap for the propensity scores. Part \( (c) \) is a standard non-degeneracy condition.
	\begin{Assumption}[GNN Rates]\label{ass7}
		For any \( t \in \mathcal{T} \), 
		\( \frac{1}{M} \sum_{g=1}^M \frac{1}{N_g} \sum_{i: g(i)=g} ( \hat{p}_t(\mathbf{X}_g, \mathbf{A}_g, \overline{S}_{g(i)}) \\  
		- p_t(\mathbf{X}_g, \mathbf{A}_g, \overline{S}_{g(i)}) )^2 \)
		and 
		\( \frac{1}{M} \sum_{g=1}^M \frac{1}{N_g} \sum_{i: g(i)=g} \left( \hat{\mu}_t(\mathbf{X}_g, \mathbf{A}_g, \overline{S}_{g(i)}) - \mu_t(\mathbf{X}_g, \mathbf{A}_g, \overline{S}_{g(i)}) \right)^2 \)
		are \( o_p(1) \), and their product is \( o_p\left( \frac{1}{MN_g} \right) \).
	\end{Assumption}
	This assumption represents a standard high-level condition for machine learning estimators in causal inference \citep{farrell2015robust}. This condition does not prescribe specific \textit{GNN} architectures but directly constrains the asymptotic properties of their output statistics.
	
	\( \psi \)-dependence is a fundamental property of weakly dependent sequences, which quantifies correlations in networked data through the decay rate of covariances \citep{kojevnikov2021limit}. This property serves as a prerequisite for the validity of the \textit{CLT} in the presence of network interference. Building upon the \textit{GNN} framework established by \cite{leung2022causal}, this assumption explicitly constrains the relationship between the dependence decay rate \( \psi_{g}(s) \) and the growth of neighborhoods \( |\mathcal{N}_{g}(i, s)| \). Define
	\begin{align*}
		\mathcal{N}_{g}^\partial(i, s) = \{ j \in \mathcal{N}_{g}: \ell(i, j) = s \},  
		\delta_{g}^\partial(s; k) = \frac{1}{N_g} \sum_{i: g(i)=g} \left| \mathcal{N}_{g}^\partial(i, s) \right|^k. 
	\end{align*}
	They respectively represent the set of nodes within the group that are at a shortest-path distance of exactly \( s \) from node \( i \), and the \( k \)-th order moment of this set of nodes. Let
	\[
	\Delta_{g}(s, r; k) = \frac{1}{N_g} \sum_{i:g(i)=g} \max_{j \in \mathcal{N}_{g}^\partial(i, s)} \left| \mathcal{N}_{g} (i, r) \setminus \mathcal{N}_{g}(j, s - 1) \right|^k,
	\]
	\[
	c_{g}(s, r; k) = \inf_{\alpha > 1} \Delta_{g}(s, r; k\alpha)^{1/\alpha} \delta_{g}^\partial\left(s; \alpha/(\alpha - 1)\right)^{1 - 1/\alpha},
	\]
	and
	\begin{equation}
		\psi_{n_g}(s) = \max_{i \in \mathcal{N}_g} (\gamma_{g}(s/4)+\eta_{g}(s/4)). \label{phs}
	\end{equation}
	
	The first equation characterizes the local network structural variation within the group. The second equation measures the network density within the group. The third quantity \( \psi_{n_g}(s) \) establishes a theoretical upper bound for covariance between influence functions. It constraints \( \varphi_{t,t'}(i) \) and \( \varphi_{t,t'}(j) \) for any node pair within path distance \( s \) in network \( \mathbf{A}_g \). Define
	\[T_i^{(s)} \equiv f_{g,s}\left( \mathbf{W}_{\mathcal{N}_g(i,s)}, \mathbf{A}_{\mathcal{N}_g(i,s)} \right).\]

	\begin{Assumption}[Weak Dependence]\label{ass8}
		\( (a) \) \( \{\varepsilon_i\}_{i:g(i)=g}^{N_g} \) is independently distributed conditional on \( (\mathbf{X}_g, \mathbf{A}_g, \overline{S}_{g(i)}) \).
		
		\( (b) \)  For each group \( g \) ,\( s \geq 0 \), there exists a function \( \eta_{g}: \mathds{R}_+ \to \mathds{R}_+ \) satisfies
		\(\sup_{N_g} \{\eta_{g}(s)\} \to 0 \quad \text{as} \quad s \to \infty,\) and
		\[\max_{i: g(i) = g} \mathds{E}\left[ \left| \mathbf{1}\{T_i = t\} - \mathbf{1}\{T_i^{(s)} = t\} \right| \middle| X_g, A_g, \overline{S}_{g(i)} \right] \leq \eta_g(s) \quad \text{a.s.}\]
		
		\( (c) \) \( \sup_{N_g} \max_{s \geq 1} \psi_{n_g}(s) < \infty \) a.s.
		\( (d) \) For \( p \) in Assumption \ref{ass6}\( (a) \), some positive sequence \( v_{g} \to \infty \) and any \( k \in \{1,2\} \),
		\[
		\frac{1}{{N_g}^{k/2}} \sum_{s=0}^{\infty} c_{g}(s,v_{g};k)\psi_{n_g}(s)^{1-(2+k)/p} \to 0, \quad {N_g}^{3/2}\psi_{n_g}(v_{g})^{1-1/p} \to 0, \quad \text{and}
		\]
		\[
		\limsup_{N_g \to \infty} \sum_{s=0}^{\infty} \sigma_{g}^2(s;2)^{1/2} \gamma_{g}(s/2)^{1-2/p} < \infty \quad \text{a.s.} 
		\]
	\end{Assumption}
	Parts \( (a) \) and Parts \( (b) \) is employed to prove that the sequence set \(\{\phi_{\{t,t'\}}(i)\}_{i:g(i)=g}\) is  \(\psi\)-dependent. 
	Perhaps the central requirement lies in part \( (d) \), which characterizes the asymptotic behavior of the three variables. The first two of these correspond to the \textit{ND} condition proposed by \cite{kojevnikov2021limit}, which they employed to establish a \textit{CLT}. The third variable is similar in nature and is used for the asymptotic linearization of the doubly robust estimator under network dependence. The proof of this expression can be found in appendix \( B \).1 of \cite{leung2022causal}.
	
	\begin{Lemma}[]
		Under Assumptions \ref{ass1}, \ref{ass2}, \ref{ass3}, \ref{ass6}(a) and (b),  \ref{ass8}(a) and (b), for any $t,t' \in \mathcal{T}$, $\{\varphi_{t,t'}(i)\}_{i:g(i)=g}^{n_g}$ is conditionally $\psi$-dependent given $(\mathbf{X}_g, \mathbf{A}_g, \overline{S}_{g(i)})$  with dependence coefficient $\psi_{n_g}(s)$ defined in equation \ref{phs}.
	\end{Lemma}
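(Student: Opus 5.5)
The plan follows the argument of \cite{leung2022causal} (Proposition on $\psi$-dependence of the doubly robust influence function), applied group by group and conditionally on $\mathcal{F}_g \equiv (\mathbf{X}_g, \mathbf{A}_g, \overline{S}_{g(i)})$. Groups are i.i.d. by Assumption \ref{ass1}, so it suffices to fix $g$. Following \cite{kojevnikov2021limit}, we need to show: for sets $H, H' \subseteq \mathcal{N}_g$ of sizes $h, h'$ with $\min_{i\in H, j\in H'} \ell_{\mathbf{A}_g}(i,j) \geq s$, and for bounded Lipschitz functionals $f, f'$,
\[
\bigl|\mathrm{Cov}(f(\varphi_H), f'(\varphi_{H'}) \mid \mathcal{F}_g)\bigr| \leq C\, \phi(h,h',f,f')\, \psi_{n_g}(s)
\]
with $\phi$ of the usual form $h h'(\|f\|_\infty \mathrm{Lip}(f') + \mathrm{Lip}(f)\|f'\|_\infty)$.

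\textbf{Step 1: a local approximation.} For each $i$, we construct $\varphi^{(s/4)}_{t,t'}(i)$ by replacing $Y_i$ with the restricted outcome $\mu_{g(i,s/4)}(\cdot)$ from Assumption \ref{ass3}, and $\mathbf{1}\{T_i=t\}$ with $\mathbf{1}\{T_i^{(s/4)}=t\}$. The nuisance functions $p_t, \mu_t$ evaluated at node $i$ are $\mathcal{F}_g$-measurable, because $\overline{S}_{g(i)}$ is group level and is conditioned on. They are therefore constants in the conditional covariance and need no localisation.

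Under the conditioning, $\varphi^{(s/4)}(i)$ is a measurable function of $(\mathbf{W}_{\mathcal{N}_g(i,s/4)}, \varepsilon_{\mathcal{N}_g(i,s/4)})$. For $i\in H$ and $j\in H'$ at distance $\geq s$, the neighbourhoods $\mathcal{N}_g(i,s/4)$ and $\mathcal{N}_g(j,s/4)$ are disjoint, with distance exceeding $s/2$ between them. We need $\mathbf{W}_g$ to be conditionally independent across such disjoint sets given $\mathcal{F}_g$; we would read this off the exponential-family factorisation in Assumption \ref{ass4}, or else simply treat it as part of the conditioning structure, as Leung does. By Assumption \ref{ass8}(a) the $\varepsilon$'s are independent. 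Together these give that $\varphi^{(s/4)}_H$ and $\varphi^{(s/4)}_{H'}$ are conditionally independent, so their covariance vanishes.

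\textbf{Step 2: approximation error.} We decompose
\[
\mathrm{Cov}(f(\varphi_H), f'(\varphi_{H'})) = \mathrm{Cov}(f(\varphi_H)-f(\varphi^{(s/4)}_H), f'(\varphi_{H'})) + \mathrm{Cov}(f(\varphi^{(s/4)}_H), f'(\varphi_{H'})-f'(\varphi^{(s/4)}_{H'})).
\]
Each term is bounded by $2\|f'\|_\infty \mathrm{Lip}(f)\sum_{i\in H}\mathbb{E}[|\varphi(i)-\varphi^{(s/4)}(i)|\mid\mathcal{F}_g]$, or by the symmetric counterpart.

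The pointwise difference splits into two parts. The first is $\mathbf{1}\{T_i=t\}(Y_i - Y_i^{(s/4)})/p_t$, whose conditional mean is at most $\gamma_g(s/4)/\eta$ by Assumption \ref{ass3} and Assumption \ref{ass5}; on $B_i=0$ the term is multiplied away. The second is $(\mathbf{1}\{T_i=t\}-\mathbf{1}\{T^{(s/4)}_i=t\})(Y_i^{(s/4)}-\mu_t)/p_t$. We bound it by H\"older's inequality, using the $p$-th moment bound in Assumption \ref{ass6}(a) together with Assumption \ref{ass8}(b), which gives order $\eta_g(s/4)^{1-1/p}$. Collecting terms gives a bound $C(\gamma_g(s/4)+\eta_g(s/4))$ up to the moment exponent, which matches $\psi_{n_g}(s)$ in \eqref{phs}.

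\textbf{Main obstacle.} The delicate step is the conditional independence in Step 1. Assumption \ref{ass2} and Lemma \ref{le1} control confounding, but they do not by themselves make treatments at distant nodes independent given $\mathcal{F}_g$. Moreover, conditioning on $\overline{S}_{g(i)}$, a group-wide average, could in principle induce dependence between far-apart nodes. My first attempt will be to argue that, given $L_g$ (equivalently $\bar U_g$ via Lemma \ref{le1}), the exponential-family density factors over local $\Phi_s$ terms. Conditioning on the finite-dimensional $\overline{S}_g$ should then induce only $O(1/N_g)$ dependence, which can be absorbed into $\psi_{n_g}$ using Assumption \ref{ass8}(c). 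If this fails, I would state local conditional independence of $\mathbf{W}_g$ given $\mathcal{F}_g$ as an implicit maintained condition, as in \cite{leung2022causal}.
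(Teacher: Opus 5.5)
You have found the right weak point, and it cannot be repaired: under the listed hypotheses the lemma is false, so Step~1 cannot be completed. Your outline is the coupling argument that the lemma and \eqref{phs} are modelled on. It localises each $\varphi_{t,t'}(i)$ at radius $s/4$ via Assumptions~\ref{ass3} and \ref{ass8}(b), treats $p_t,\mu_t$ as constants given $\mathcal{F}_g$, uses conditional independence of the localised terms, and charges the error to $\gamma_g(s/4)+\eta_g(s/4)$.

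Here is a counterexample with conditioning set $(\mathbf{X}_g,\mathbf{A}_g,\overline{S}_{g(i)})$:
\begin{itemize}
\item Take a cycle, so $\overline{A_gW_g}=2\overline{W}_g$.
\item Let $Y_i(\boldsymbol{w}_g)=w_i+\varepsilon_i$, with $\varepsilon_i$ i.i.d.\ Rademacher and independent of everything else.
\item Set $T_i=W_i$, with $W_i$ i.i.d.\ Bernoulli and independent of $\mathbf{X}_g$ given $L_g$.
\end{itemize}
Assumptions~\ref{ass1}, \ref{ass2}, \ref{ass3}, \ref{ass6}(a), \ref{ass8}(a) and \ref{ass8}(b) hold with $\gamma_g\equiv\eta_g\equiv 0$; Assumption~\ref{ass6}(b) only concerns $\hat p_t$. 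So the coefficient in \eqref{phs} is identically zero.

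Now condition on $\overline{W}_g=k/N_g=:q$ with $0<k<N_g$. Then $\mathbf{W}_g$ is uniform on $\{\sum_i w_i=k\}$, and $p_1=q$, $\mu_1=1$, $\mu_0=0$, $\tau(1,0)=1$. Hence $|\varphi_{1,0}(i)|=(1-q)^{-1}+W_i\{q^{-1}-(1-q)^{-1}\}$. For every pair $i\neq j$, whatever their distance,
\[
\mathrm{Cov}\left(|\varphi_{1,0}(i)|,|\varphi_{1,0}(j)| \mid \mathcal{F}_g\right) = -\left(\frac{1}{q}-\frac{1}{1-q}\right)^{2}\frac{k(N_g-k)}{N_g^{2}(N_g-1)}.
\]
This is nonzero whenever $k\neq N_g/2$, and $x\mapsto\min\{|x|,K\}$ is bounded Lipschitz. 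The conditioning that makes the nuisances constants is exactly what couples all the treatments.

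The same example rules out each of your patches.
\begin{itemize}
\item Reading local independence off Assumption~\ref{ass4} fails. That assumption is not a hypothesis of the lemma, and its carrier $h$ is unrestricted. Moreover, by sufficiency the law of $\mathbf{W}_g$ given $(\mathbf{X}_g,\mathbf{A}_g,\overline{S}_g)$ is $h$ restricted to a level set of $S$. So $\eta(L_g)$ drops out, and the level-set constraint is precisely the global coupling above.
\item The resulting $O(1/N_g)$ covariance cannot be absorbed through Assumption~\ref{ass8}(c), which only bounds $\psi_{n_g}$. The covariance does not decay in $s$, and the stated coefficient can vanish. Inflating the coefficient by $C/N_g$ proves a different lemma and breaks $N_g^{3/2}\psi_{n_g}(v_g)^{1-1/p}\to 0$ in Assumption~\ref{ass8}(d).
\item Positing local conditional independence of $\mathbf{W}_g$ given $\mathcal{F}_g$ is an added hypothesis, and i.i.d.\ selection already violates it.
\end{itemize}

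A provable version has to change the statement. Condition on $(\mathbf{X}_g,\mathbf{A}_g,L_g)$, or on $\bar{U}_g$. Add, as Leung and Loupos do, a selection model $W_i=r_g(i,\mathbf{X}_g,\mathbf{A}_g,\boldsymbol{\nu}_g)$ with $(\varepsilon_i,\nu_i)$ conditionally independent, plus an ANI condition for selection. Localising outcomes in treatments and treatments in shocks, each at radius $s/4$, is what the $s/4$ in \eqref{phs} is built for. The price is that $p_t,\mu_t$ evaluated at $\overline{S}_{g(i)}$ are no longer constants and need their own argument.

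Separately, in Step~2 do not use H\"older for the exposure mismatch. It only yields $\eta_g(s/4)^{1-1/p}$, and it needs $p$-th moments of the restricted outcome, which Assumption~\ref{ass6}(a) does not give. Since $f$ is bounded, that piece is at most $2\|f\|_\infty\sum_{i\in H}\Pr(\mathbf{1}\{T_i=t\}\neq\mathbf{1}\{T_i^{(s/4)}=t\}\mid\mathcal{F}_g)\le 2\|f\|_\infty|H|\,\eta_g(s/4)$ per exposure level. This is linear in $\eta_g$, as \eqref{phs} requires.
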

	
	\begin{Theorem}\label{th2}
		Under Assumptions 1--8,
		\[
		\sigma^{-1/2} \sqrt{MN_g} \left( \hat{\tau}_{\textit{GME-GNN}}(t, t') - \tilde{\tau}_{(t,t')} \right) \stackrel{d}{\to} \mathcal{N}(0, 1).
		\]
	\end{Theorem}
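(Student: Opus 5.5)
The plan is the standard doubly robust decomposition: replace the estimated nuisances by their population counterparts, show the remainder is negligible, and then apply a CLT to the oracle sum.

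\textbf{Step 1: decomposition.} I would write
\[
\hat{\tau}_{\textit{GME-GNN}}(t,t') - \tilde{\tau}_{(t,t')} = \frac{1}{M}\sum_{g=1}^M \varphi_{t,t'}(g) + R_1 + R_2 + R_3 .
\]
Here the first term is the oracle average of the group influence functions $\varphi_{t,t'}(g)$. The remainders are:
\begin{itemize}
\item $R_1$ collects the difference between $\tau(t,t')$ and the conditional effects $\tilde{\tau}_{i(t,t')}$ on the overlap set. Since the target is the conditional estimand $\tilde{\tau}_{(t,t')}$, this term is exactly centred once we condition on $(\mathbf{X}_g,\mathbf{A}_g,\overline{S}_{g(i)})$. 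Assumption~\ref{ass5} and Theorem~\ref{th1} give $\mathds{E}[Y_i(t)\mid\cdot]=\mu_t(\cdot)$, which lets us use $\tilde{\tau}_{i(t,t')}=\mu_t-\mu_{t'}$. The normalisation by $\bar{B}$ instead of the number of overlap units is handled with a law of large numbers for $\bar{B}$ across the i.i.d.\ groups of Assumption~\ref{ass1}.
\item $R_2$ is the first-order error from replacing $(\hat p,\hat\mu)$ by $(p,\mu)$. Its typical term is $\mathbf{1}\{T_i=t\}(Y_i-\mu_t)(1/\hat p_t-1/p_t)$ plus $(\hat\mu_t-\mu_t)(1-\mathbf{1}\{T_i=t\}/p_t)$.
\item $R_3$ is the product term $(\hat\mu_t-\mu_t)(\hat p_t-p_t)/(\hat p_t p_t)$.
\end{itemize}

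\textbf{Step 2: the product term.} By Assumption~\ref{ass6}(b) and the overlap bound on $\mathds{B}$, the denominators in $R_3$ are bounded below by $\eta^2/2$. Cauchy--Schwarz and Assumption~\ref{ass7} then give $|R_3|\le C\,(\text{MSE}_p\cdot\text{MSE}_\mu)^{1/2}=o_p((MN_g)^{-1/2})$.

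\textbf{Step 3: the first-order terms (main obstacle).} The terms in $R_2$ have conditional mean zero given $(\mathbf{X}_g,\mathbf{A}_g,\overline{S}_{g(i)})$ and the nuisance estimates. This holds because $\mathds{E}[\mathbf{1}\{T_i=t\}(Y_i-\mu_t)\mid\cdot]=0$ and $\mathds{E}[1-\mathbf{1}\{T_i=t\}/p_t\mid\cdot]=0$.

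The difficulty is that $\hat p,\hat\mu$ are trained on the same dependent data, so we cannot simply condition on them. I would first try to follow Appendix B.1 of \cite{leung2022causal}. The GNN outputs are functions of $L$-neighbourhoods, and the residual sequences are $\psi$-dependent by the preceding Lemma. Bounding the variance of each $R_2$ term by sums of covariances over path distance $s$ then gives a bound of the form
\[
\text{MSE}\cdot\frac{1}{N_g}\sum_s \sigma_g^2(s;2)^{1/2}\gamma_g(s/2)^{1-2/p}.
\]
This is finite by the third condition of Assumption~\ref{ass8}(d). Combined with the $o_p(1)$ rates in Assumption~\ref{ass7}, it yields $R_2=o_p((MN_g)^{-1/2})$.

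If that argument fails because of the data reuse, I would fall back on sample splitting across groups. Independence across groups under Assumption~\ref{ass1} makes cross-fitting over groups natural, and it would deliver exact conditional centring.

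\textbf{Step 4: CLT.} It remains to show $\sigma^{-1/2}\sqrt{MN_g}\,M^{-1}\sum_g\varphi_{t,t'}(g)\to\mathcal{N}(0,1)$.
\begin{itemize}
\item Within each group, the preceding Lemma gives conditional $\psi$-dependence with coefficient $\psi_{n_g}(s)$. The first two conditions of Assumption~\ref{ass8}(d), which are the \textit{ND} conditions, together with the moment bound of Assumption~\ref{ass6}(a) with $p>4$, allow the Lyapunov-type bounds of \cite{kojevnikov2021limit} for the within-group normalised sums $N_g^{-1/2}\sum_i B_i\varphi_{t,t'}(i)$.
\item Across groups these normalised sums are independent, so a Lindeberg--Feller CLT for independent, non-identically distributed triangular arrays applies. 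The Lindeberg condition is checked using the uniform $p$-th moment bounds transferred through the within-group dependence bounds.
\item Assumption~\ref{ass6}(c) guarantees the normalising variance $\sigma^2$ stays away from zero.
\end{itemize}
Slutsky's lemma then combines Steps 1--4 to give the stated result.
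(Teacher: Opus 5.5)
Your overall route is the one the paper signposts: the doubly robust expansion, Cauchy--Schwarz on the product term, the third condition of Assumption~\ref{ass8}(d) ``for the asymptotic linearization'', and the \textit{ND} conditions plus independence across groups for the CLT. Step~2 is fine. The genuine gap is Step~3.

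\textbf{Step 3: the fitted nuisances are not fixed weights.} The nuisances are obtained by empirical risk minimization on the same units (Section~\ref{subsec4}). In particular, $\hat\mu_t$ is fitted only on units with $T_j=t$. So $\hat\mu_t(i)-\mu_t(i)$ depends, through the fitted parameters, on $T_i$ (which decides whether $i$ is in the training sample), on $Y_i$, and on the exposures and outcomes of $i$'s neighbours. The $L$-neighbourhood locality of the architecture does not help, because it constrains the inputs, not the parameters.

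Consequently $\mathds{E}[(\hat\mu_t(i)-\mu_t(i))(1-\mathbf{1}\{T_i=t\}/p_t(i))\mid \mathbf{X}_g,\mathbf{A}_g,\overline{S}_{g(i)}]$ need not vanish. Your covariance-sum bound applies to $N_g^{-1}\sum_i a_i e_i$ with weights $a_i$ fixed given the conditioning variables and conditionally centred $e_i$, so it does not cover this term. The preceding Lemma gives $\psi$-dependence of the oracle $\varphi_{t,t'}(i)$ only. Assumption~\ref{ass7} supplies only $L_2$ rates, which control $R_3$ but not the two first-order terms.

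Without sample splitting you need an extra ingredient. Either a stochastic-equicontinuity (entropy) bound for $\mathcal{F}_{\text{GNN}}(L)$ valid under $\psi$-dependence, or high-level conditions in the style of \cite{farrell2015robust} requiring both first-order terms to be $o_p((MN_g)^{-1/2})$. Your fallback, cross-fitting over the i.i.d.\ groups, is the standard repair. Conditional on the out-of-fold fits the weights are fixed and the terms are exactly conditionally centred, so a covariance-sum bound with $\mathrm{MSE}=o_p(1)$ gives $o_p((MN_g)^{-1/2})$. But this proves the theorem for a cross-fitted estimator, not the one defined in Section~\ref{subsec3}.

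\textbf{Steps 1 and 4: the centring does not close as written.} Since $\mathds{E}[\varphi_{t,t'}(i)\mid \mathbf{X}_g,\mathbf{A}_g,\overline{S}_{g(i)}]=\tau_i(t,t')-\tau(t,t')\neq 0$ under heterogeneous effects, the conditional CLT must be run on $\tilde\varphi_{t,t'}(i)$ from Theorem~\ref{th3}, not on $\varphi_{t,t'}(g)$. This costs nothing, since $\tilde\varphi_{t,t'}(i)$ has the same conditional variance, so $\sigma^2$ is unchanged.

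Your $R_1$ is then not a remainder that a law of large numbers can dispose of. Conditional on $(\mathbf{X},\mathbf{A},\overline{S})$ it is nonrandom, and at the $\sqrt{MN_g}$ scale it must in effect vanish identically. What you need is the exact identity $\bar B^{-1}M^{-1}\sum_g N_g^{-1}\sum_{i:g(i)=g}B_i\tilde\tau_{i(t,t')}=\tilde\tau_{(t,t')}$. It holds under two conditions:
\begin{itemize}
\item the estimator carries the factor $1/\bar B$, which the displayed $\hat\tau_{\textit{GME-GNN}}$ omits but the $\bar B^{-2}$ in $\sigma^2$ presupposes;
\item the $N_g$ are equal, which the $\sqrt{MN_g}$ scaling implicitly assumes.
\end{itemize}
With unequal $N_g$, the estimator's group-averaged weights and the pooled weights defining $\tilde\tau_{(t,t')}$ differ by a term generically of order $M^{-1/2}$ whenever $\tilde\tau_{i(t,t')}$ varies across groups. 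It is larger still if $N_g$ correlates with group effects. That term diverges after multiplication by $\sqrt{MN_g}$, and an LLN for $\bar B$, which delivers only $o_p(1)$, cannot remove it.
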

	Critically, because conditioning is performed on the observed network structure and covariates, the estimator \( \hat{\sigma}^2 \) is not consistent. We prove this estimator exhibits asymptotic conservatism, following \cite{leung2022graph}. Consequently, it systematically overestimates the true variance in large samples. Define
	\[
	\mathcal{J}_g(s, r) = \left\{ (i, j, k, l) \in \mathcal{N}_{g}^4 : k \in \mathcal{N}_{g}(i, r), l \in \mathcal{N}_{g}(j, r), \ell_{\mathbf{A}_g}(i, j) = s \right\}.
	\]
	
	\begin{Assumption}[\textit{HAC}]\label{ass9} 
		\( (a) \) For some $C > 0$, $i \in \mathcal{N}_{g}$, and $t \in \mathcal{T}$, $\left| \max\{Y_i, \hat{\mu}_t(\boldsymbol{X}_g, \boldsymbol{A}_g,\overline{S}_{g(i)})\} \right| < C$ a.s.
		
		\( (b) \) $\frac{1}{M} \sum_{g=1}^{M}\frac{1}{N_g} \sum_{i:g(i)=g} (\hat{p}_t( \boldsymbol{X}_g, \boldsymbol{A}_g,\overline{S}_{g(i)}) - p_t( \boldsymbol{X}_g, \boldsymbol{A}_g,\overline{S}_{g(i)}))^2$ and
		$\frac{1}{M} \sum_{g=1}^{M}\frac{1}{N_g} \sum_{i:g(i)=g}\\ (\hat{\mu}_t(\boldsymbol{X}_g, \boldsymbol{A}_g,\overline{S}_{g(i)}) - \mu_t( \boldsymbol{X}_g, \boldsymbol{A}_g,\overline{S}_{g(i)}))^2$ are $o_P({(MN_g)}^{-1/2}).$
		
		\( (c) \) For some $\epsilon \in (0, 1)$ and $b_{g} \to \infty$, $\lim_{{N_g} \to \infty} {N_g}^{-1} \sum_{s=0}^{\infty} c_{g}(s, b_{g}; 2) 
		\psi_{g}(s)^{1 - \epsilon} = 0$ a.s. 
		
		\( (d) \) $\frac{1}{M} \sum_{g=1}^{M}\frac{1}{N_g} \sum_{i:g(i)=g} N_g(i, b_{g}) = o_p(\sqrt{MN_g})$. 
		
		\( (e) \) $\frac{1}{M} \sum_{g=1}^{M}\frac{1}{N_g} \sum_{i:g(i)=g} N_g(i, b_{g})^2 = O_p(\sqrt{MN_g})$. 
		
		\( (f) \) $\frac{1}{M} \sum_{g=1}^{M}\sum_{s=0}^{\infty} \left| \mathcal{J}_{g}(s, b_{g}) \right| \psi_{g}(s) = o((MN_g)^2)$.
	\end{Assumption}
	Part \( (a) \) of Assumption \ref{ass9} strengthens the corresponding condition in Assumption \ref{ass6}\( (a) \). Assumption \ref{ass9}\( (b) \) mildly strengthens Assumption \ref{ass7}, with the additional requirement that these two mean squared errors converge at \( o_P({(MN_g)}^{-1/2}) \). Assumption \ref{ass9}\( (c) \) is consistent with Assumption 4.1(iii) of \cite{kojevnikov2021limit}. Assumptions \ref{ass9}\( (d) \)\text{-}\( (f) \) correspond to Assumptions 7\( (b) \)\text{-}\( (d) \) in \cite{leung2022graph}, which characterize the bias properties of the variance estimator.
	\begin{Theorem}\label{th3}
		Define $\tilde{\varphi}_{t,t'}(i)$ by replacing $\tau(t,t')$ in the definition of $\varphi_{t,t'}(i)$ with $\tau_i(t,t') = \mathds{E}[Y_i \mid T_i = t, \boldsymbol{X}_g, \boldsymbol{A}_g,\overline{S}_{g(i)}] - \mathds{E}[Y_i \mid T_i = t',  \boldsymbol{X}_g, \boldsymbol{A}_g,\overline{S}_{g(i)}]$. Let
		\[
		\hat{\sigma}_{*}^{2} = \frac{1}{M} \sum_{g=1}^{M}\frac{1}{N_g} \sum_{i:g(i)=g}\sum_{j:g(j)=g} B_i\tilde{\varphi}_{t,t'}(i)B_j \tilde{\varphi}_{t,t'}(j) \mathbf{1}\{\ell_{\boldsymbol{A}_g}(i, j) \leq b_{g}\}
		\]
		and
		\begin{align*}
			R = \frac{1}{M} \sum_{g=1}^{M}\frac{1}{N_g} \sum_{i:g(i)=g}\sum_{j:g(j)=g} B_i \left( \tau_i(t,t') - \tau(t,t') \right) B_j\left( \tau_j(t,t') - \tau(t,t') \right)\\
			\times \mathbf{1}\{\ell_{\boldsymbol{A}_g}(i, j) \leq b_{g}\}.
		\end{align*}
		Under Assumption \ref{ass9} and the assumptions of Theorem \ref*{th2},
		\[
		\hat{\sigma}^2 = \hat{\sigma}_{*}^{2} + R + o_p(1) \quad \text{and} \quad \left| \hat{\sigma}_{*}^{2} - \sigma^2 \right| \stackrel{p}{\to} 0.
		\]
	\end{Theorem}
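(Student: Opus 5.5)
Following Leung (2022) (the GNN-based causal inference paper cited throughout), I will compare $\hat{\sigma}^2$ with the oracle quantity $\hat{\sigma}_*^2$ in two layers. The first layer removes the first-stage GNN estimation error. The second layer removes the centering error coming from using $\tau(t,t')$ in place of the unit-level $\tau_i(t,t')$.

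\textbf{Step 1: replace $\hat{\tau}_i$ by the true influence function.} Write
\[
B_i(\hat{\tau}_i - \hat{\tau}_{GME\text{-}GNN}) = B_i\tilde{\varphi}_{t,t'}(i) + B_i(\tau_i - \tau) + B_i(\hat{\tau}_i - \tau_i^{\rm DR}) - B_i(\hat{\tau}_{GME\text{-}GNN} - \tau),
\]
where $\tau_i^{\rm DR}$ is the doubly robust score evaluated at the true $(p_t,\mu_t)$. Substituting this into the HAC double sum produces the following terms:
\begin{itemize}
\item the product of the first term with itself, which is $\hat{\sigma}_*^2$;
\item the product of the second term with itself, which is $R$;
\item cross terms and remainder terms.
\end{itemize}
For the remainder terms I will use Cauchy--Schwarz over the pairs within bandwidth. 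The number of such pairs is $\sum_i |\mathcal{N}_g(i,b_g)|$.

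The difference $\hat{\tau}_i - \tau_i^{\rm DR}$ expands into three kinds of pieces:
\begin{itemize}
\item terms linear in $\hat{p}_t - p_t$ and in $\hat{\mu}_t - \mu_t$, whose coefficients are bounded by Assumption \ref{ass6}(b) and Assumption \ref{ass9}(a);
\item their product term.
\end{itemize}
Hence $\max$-free bounds of the form $\frac{1}{MN_g}\sum_i |\mathcal{N}_g(i,b_g)|\,(\hat{p}-p)^2$ arise. These are controlled by Assumption \ref{ass9}(b) combined with (d)--(e). The crude bound $|\mathcal{N}_g(i,b_g)|\le$ its average together with $\sqrt{\cdot}$ of the second moment gives $o_p(1)$.

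The term involving $\hat{\tau}_{GME\text{-}GNN} - \tau$ is $O_p((MN_g)^{-1/2})$ by Theorem \ref{th2}. Its contribution is therefore at most $(MN_g)^{-1}\cdot\frac{1}{M}\sum_g\frac1{N_g}\sum_i|\mathcal{N}_g(i,b_g)| = o_p(1)$ by (d).

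The cross term between $\tilde{\varphi}$ and $(\tau_j - \tau)$ needs separate treatment. I will argue that it has conditional mean zero, because $\mathds{E}[\tilde{\varphi}(i)\mid \mathbf{X}_g,\mathbf{A}_g,\overline{S}_{g(i)}]=0$ by unconfoundedness (Theorem \ref{th1}) while $\tau_j$ is measurable with respect to the conditioning set. Its variance is then bounded via $\psi$-dependence (the Lemma) using Assumption \ref{ass9}(f), giving $o_p(1)$.

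\textbf{Step 2: $\hat{\sigma}_*^2 - \sigma^2 \to_p 0$.} Conditional on $(\mathbf{X}_g,\mathbf{A}_g,\overline{S}_g)$, the variables $B_i\tilde{\varphi}(i)$ have mean zero and are $\psi$-dependent with coefficient $\psi_{n_g}(s)$ by the Lemma. The argument splits into bias and variance parts.

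\emph{Bias.} The expectation of $\hat{\sigma}_*^2$ differs from the within-group variance only through pairs at distance greater than $b_g$. Their total covariance is bounded by $\frac1{N_g}\sum_{s>b_g}\delta_g^\partial(s;1)\psi(s)^{1-2/p}$. This vanishes by Assumption \ref{ass8}(d) and (c) as $b_g\to\infty$, which is the standard Kojevnikov et al.\ (2021) Proposition 4.1 argument.

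\emph{Variance.} For the variance of $\hat{\sigma}_*^2$ around its conditional mean, I expand it as a sum over quadruples in $\mathcal{J}_g(s,b_g)$. Covariances of products are bounded by $\psi_g(s)$ using the moment bound in Assumption \ref{ass9}(a). Assumption \ref{ass9}(c) and (f) then give that this variance is $o(1)$.

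\emph{Aggregation.} Groups are independent by Assumption \ref{ass1}, so the average over $M$ groups adds further concentration. The normalization by $\bar B^2$ in $\sigma^2$ is handled by $\bar B\to_p \mathds{E}B>0$.

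\textbf{Main obstacle.} I expect the hardest part to be the variance bound in Step 2, which is a fourth-order covariance computation under $\psi$-dependence. Here I would invoke directly the corresponding lemma of Kojevnikov et al.\ (2021) (their Proposition 4.1 / Leung's Theorem 4 proof) group by group, and then average. The cross-term argument in Step 1 is also delicate, since the $\tau_i$ are random only through conditioning variables. If the conditional-mean-zero argument fails, the fallback is to bound that cross term by Cauchy--Schwarz as $\sqrt{\hat{\sigma}_*^2 R}$. This keeps the decomposition valid up to a term that does not affect the conservativeness conclusion.
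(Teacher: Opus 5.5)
Your route is the same as the paper's. The paper gives no argument beyond citing Theorem 2 of Leung's GNN paper for $\hat\sigma^2=\hat\sigma_*^2+R+o_p(1)$ and Proposition 4.1 of Kojevnikov et al.\ for $\hat\sigma_*^2$, and your Step 1 is a faithful unpacking of the former. Step 2, however, has two genuine gaps.

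\textbf{The $\bar B$ normalization.} It is not ``handled by $\bar B\to_p\mathds{E}B>0$''. The difference $\varphi_{t,t'}(i)-\tilde\varphi_{t,t'}(i)=\tau_i(t,t')-\tau(t,t')$ is fixed given the conditioning variables. Hence $\mathds{E}[\hat\sigma_*^2\mid\cdot]$ differs from $\frac1M\sum_g\mathrm{Var}(N_g^{-1/2}\sum_{i:g(i)=g}B_i\varphi_{t,t'}(i)\mid\cdot)=\bar B^2\sigma^2$ only by the bandwidth bias. So your argument yields $\hat\sigma_*^2-\bar B^2\sigma^2\to_p0$, not $\hat\sigma_*^2-\sigma^2\to_p0$. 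Since $\liminf\sigma^2>0$, a limit $\mathds{E}B<1$ leaves the non-vanishing discrepancy $(\bar B^2-1)\sigma^2$. You need either $\bar B\to_p1$, or a $\bar B^{-2}$ rescaling of $\hat\sigma_*^2$ (and with it of $\hat\sigma^2$ and $R$). The mismatch is in the theorem as written, but your proposal claims to resolve it and does not.

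\textbf{The bias step.} Both $\hat\sigma_*^2$ and $\sigma^2$ carry the factor $N_g^{-1}\sum_{i,j}$. So the pairs beyond the bandwidth contribute at most $C\sum_{s>b_g}\delta^\partial_g(s;1)\psi_{n_g}(s)^{1-2/p}$ per group, with no extra $1/N_g$. That this tail vanishes is exactly the kernel-bias condition in Kojevnikov et al.'s Assumption 4.1, and you must import it. Assumption 8(c)--(d) do not supply it: the third part of 8(d) involves $\gamma_g(s/2)$ rather than $\psi_{n_g}(s)$, and it only bounds a $\limsup$, so it gives no tail decay as $b_g\to\infty$.

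Moreover, you average $M\to\infty$ group-level errors. So this bias bound, and the 9(c) variance bound, must hold uniformly in $g$. Independence across groups controls the fluctuation of the average, not the average of the biases. ``Apply Kojevnikov et al.\ group by group and average'' is therefore not enough on its own.

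\textbf{Smaller repairs.}
\begin{itemize}
\item The linear cross terms in $\hat\tau_{\textit{GME-GNN}}-\tau$ need the rate $(MN_g)^{-1/2}$, not its square, combined with 9(d). Theorem 2 gives that rate only if $\tau(t,t')$ is read as the conditional estimand $\tilde\tau_{(t,t')}$; say so.
\item $|\mathcal N_g(i,b_g)|$ is not bounded by its average. The correct step is Cauchy--Schwarz with 9(e) and $(\hat p_t-p_t)^4\le(\hat p_t-p_t)^2$ (similarly for $\hat\mu_t-\mu_t$ via 9(a)). This gives $O_p((MN_g)^{1/4})\,o_p((MN_g)^{-1/4})$.
\item The conditional mean-zero property of $\tilde\varphi_{t,t'}(i)$ follows from the definitions of $\mu_t$ and $\tau_i$, not from unconfoundedness. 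The argument is sound, so drop the Cauchy--Schwarz fallback. It would leave an $O_p(1)$ term of indefinite sign, and so would establish neither the decomposition nor conservativeness.
\item With groups independent, the conditional variance of that cross term is of order $M^{-2}\sum_gN_g^{-2}\sum_s|\mathcal J_g(s,b_g)|\psi_g(s)$. Assumption 9(f), as normalized in the paper, only makes this $o(M)$. You need a per-group bound such as $\sum_s|\mathcal J_g(s,b_g)|\psi_g(s)=O(N_g^2)$ uniformly in $g$.
\end{itemize}
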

	The variance estimator \( \hat{\sigma}^2 \) exhibits asymptotic bias, which is captured by the term \( R \). This result is established through Theorem 2 of \cite{leung2022graph} and Proposition 4.1 of \cite{kojevnikov2021limit}.
	
	For variance estimation \( \hat{\sigma}^2 \) in this context, the \textit{HAC} estimator is particularly appropriate. It operates by assigning weights to pairs of units through a kernel function based on their network distance, thereby capturing the dependence structure. Following the approach of \cite{leung2022causal}, the \textit{HAC} variance estimator \( \hat{\sigma}^2 \) in this paper employs the uniform kernel, defined as \( \mathbf{1}\{\ell_{A_g}(i, j) \leq b_{g}\} \).
	
	\section{Simulation study}
	We conduct simulation experiments to evaluate the finite-sample performance of the proposed \textit{GME\text{-}GNN}estimator under two key dimensions: (i) the strength of network dependence, and (ii) the degree of group-level heterogeneity. These two dimensions are central to the theoretical framework developed in Sections \ref{sec:Model} and \ref{Theory}.
	\subsection{Design}
	
	We generate \(M \) independent groups. For each group \(g\), the group size \(N_g\)  randomly determined between a preset minimum and maximum number of individuals. Within each group, we construct a small-world network using the Watts–Strogatz model. The number of nodes equals \(N_g\), the initial number of neighbors is \(k \in \{4, 8\}\) (depending on the dependence strength), and the rewiring probability is \(p \in \{0.1, 0.5\}\). The resulting adjacency matrix is denoted \(\mathbf{A}_g\).
	
	\begin{figure}[htbp]
		\centering
		\includegraphics[width=0.6\textwidth]{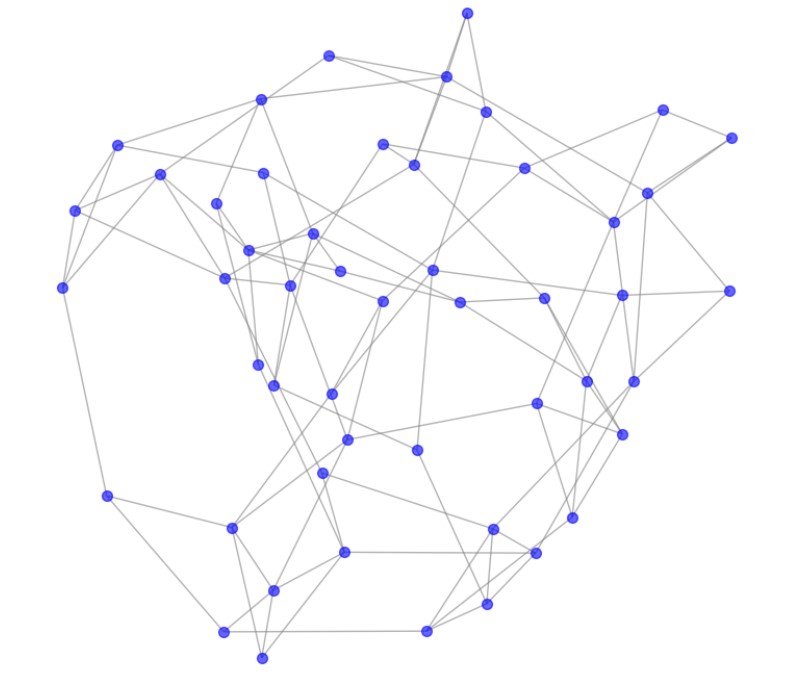} 
		\caption{Watts–Strogatz}
		\label{fig1}
		\footnotesize
		Note: Demonstrating the network topology with 50 nodes, 100 edges, and an average degree of 4.00.
	\end{figure}
	
	According to the exponential family condition (Assumption~\ref{ass4}), the joint distribution of \((\mathbf{W}_g,\mathbf{X}_g,\mathbf{A}_g)\) conditional on the latent group label \(L_g\) belongs to an exponential family with sufficient statistic \(S(\mathbf{W}_g,\mathbf{X}_g,\mathbf{A}_g)\).
	In our simulation, the unobserved group label \(L_g\) is operationalized via two group-specific latent variables, which are a group intercept \(\alpha_g\) that shifts the outcome equation and a covariate mean  \(\mu_{X,g}\) that shifts the distribution of individual covariates.
	These two variables together act as the empirical counterpart of \(L_g\). They capture all between-group heterogeneity that may confound the treatment effect.
	
	To vary the strength of group-level heterogeneity, we set the parameters as either low heterogeneity with \(\alpha_g \sim \mathcal{N}(0,1.5^2)\) and \(\mu_{X,g} \sim \mathcal{N}(0,1.0^2)\), or high heterogeneity with \(\alpha_g \sim \mathcal{N}(0,3.0^2)\) and \(\mu_{X,g} \sim \mathcal{N}(0,2.0^2)\).
	Individual covariates are then drawn as \(X_i \sim \mathcal{N}(\mu_{X,g(i)}, 1)\), independently across individuals. This specification ensures that the distribution of \(X_i\) varies across groups in a way. 

	To generate network confounding, we let the treatment probability depend on the individual’s own covariate, the average covariate of its neighbors, its degree, and the group-level mean shift \(\mu_{X,g}\) (which is part of the unobserved \(L_g\)). Specifically,
	\[
	\operatorname{logit}\bigl(P(W_i=1)\bigr) = \gamma_0 + \gamma_1 X_i + \gamma_2 \frac{\sum_j A_{ij} X_j}{\sum_j A_{ij}} + \gamma_3 \deg_i + \gamma_4 \mu_{X,g(i)},
	\]
	where \(\deg_i = \sum_j A_{ij}\). The intercept \(\gamma_0\) is calibrated so that the overall treatment proportion is approximately 0.5. The coefficients are set to \((\gamma_1,\gamma_2,\gamma_3,\gamma_4) = (0.5,0.5,0.2,0.5)\) for weak network confounding, and to \((1.5,1.5,0.8,1.5)\) for strong network confounding.

	The outcome \(Y_i\) is generated as
	\[
	Y_i = \alpha_{g(i)} + \beta X_i + \tau_{g(i)} \cdot T_i + \delta \cdot \frac{\sum_j A_{ij} W_j}{\sum_j A_{ij}} + \varepsilon_i,
	\]
	with \(\beta = 1.5\), \(\varepsilon_i \sim \mathcal{N}(0,0.5^2)\). The group-specific treatment effect \(\tau_{g}\) is drawn from \(\mathcal{N}(0.5, \sigma_\tau^2)\); we set \(\sigma_\tau = 0\) under low heterogeneity and \(\sigma_\tau = 0.15\) under high heterogeneity to allow for effect heterogeneity as permitted by our framework. The exposure mapping \(T_i\) is defined as the network exposure \(T_i = \mathbf{1}\{\sum_j A_{ij} W_j \ge 1\}\) (at least one treated neighbor), and the network interference coefficient \(\delta\) is set to \(0.8\) for weak network dependence and \(3.0\) for strong dependence.
	
	The graph neural network employs a Graph Convolutional Network architecture, utilizing a normalized adjacency matrix for message passing. The model consists of two \textit{GCN} convolutional layers, each with 16 or 32 hidden channels, and applies dropout (with a rate of 0.1 or 0.2) after each convolution to prevent overfitting. The hidden layers use \textit{ReLU} activation, while the output layer employs a linear activation for regression tasks or a Sigmoid for classification tasks. The model is trained using PyTorch's Adam optimizer with a learning rate of 0.001 or 0.005.
	
	The \textit{GNN}-based estimator \textit{(GME-GNN)} jointly estimates propensity scores and outcome regressions within each group. The \textit{GNN}-only estimator does not use any group-level sufficient statistics. The Mundlak estimator incorporates individual treatment, covariates, and their within-group means via \textit{OLS} regression to absorb between-group heterogeneity. These three methods serve as benchmarks for comparison with the \textit{GNN}-based estimator.
	
	\subsection{Simulation Under Weak Network Dependence}
	Table \ref{tab:l-w} shows that under weak network dependence and low between-group heterogeneity, \textit{GME-GNN} performs best across both group sizes. For $N_g = 100 \sim 200$, its \textit{RMSE} is 0.2883, outperforming \textit{GNN-only} (0.4998) and Mundlak (0.4075). When group size increases to $N_g = 300 \sim 400$, \textit{GME-GNN} 's \textit{RMSE} drops to 0.2068, and \textit{GNN-only} improves sharply to 0.2097; Mundlak's \textit{RMSE} remains high at 0.4327.
	
	
	
	\begin{table}[htbp]
		\centering
		\caption{Simulation results for Heterogeneity=low, Dependence=weak}
		\label{tab:l-w}
		\begin{tabular}{llcccccc}
			\toprule
			& & \multicolumn{3}{c}{$N_g=100\sim200$} & \multicolumn{3}{c}{$N_g=300\sim400$} \\
			$M$ & Method  & \textit{MAE} & \textit{MSE} & \textit{RMSE}  & \textit{MAE} & \textit{MSE} & \textit{RMSE}\\
			\midrule
			\multirow{3}{*}{100}
			& \textit{GME-GNN}   & 0.2502 & 0.0831 & 0.2883  & 0.2033 & 0.0482 & 0.2068  \\
			& \textit{GNN-only}  & 0.4152  & 0.2498 & 0.4998  & 0.2067 & 0.0440 & 0.2097 \\
			& Mundlak  & 0.4064 & 0.1660 & 0.4075  & 0.4326 & 0.1873 & 0.4327 \\
			\bottomrule
		\end{tabular}
		\vspace{0.3em}
		\parbox{\linewidth}{\footnotesize
			Notes: Results are based on 1,000 independent simulations. 
			\textit{MAE} (mean absolute error) $= \frac{1}{n} \sum_{i=1}^{n} |\hat{\tau}_i - \tau_i|$, 
			\textit{MSE} (mean squared error) $= \frac{1}{n} \sum_{i=1}^{n} (\hat{\tau}_i - \tau_i)^2$, 
			\textit{RMSE} (root mean squared error) $= \sqrt{\text{MSE}}$. 
			In the tables, $M = 100$ denotes the number of groups, and $N_g$ is the group size.}
	\end{table}

Table \ref{tab:h-w}  indicates that under high heterogeneity, \textit{GME-GNN} continues to be the best, with \textit{RMSE}s of 0.2670 and 0.1999 for the two group sizes. \textit{GNN-only} suffers a large error for smaller groups (0.6366), but its \textit{RMSE} plummets to 0.2056 when group size increases. Mundlak's error stays around \(0.42 \sim 0.43\) for both group sizes, showing little improvement.
\begin{table}[htbp]
	\centering
	\caption{Simulation results for Heterogeneity=high, Dependence=weak}
	\label{tab:h-w}
	\begin{tabular}{llcccccc}
		\toprule
		& & \multicolumn{3}{c}{$N_g=100\sim200$} & \multicolumn{3}{c}{$N_g=300\sim400$} \\
		$M$ & Method  & \textit{MAE} & \textit{MSE} & \textit{RMSE}  & \textit{MAE} & \textit{MSE} & \textit{RMSE}\\
		\midrule
		\multirow{3}{*}{100}
		& \textit{GME-GNN}   & 0.2320 & 0.0713 & 0.2670  & 0.1969 & 0.0400 & 0.1999  \\
		& \textit{GNN-only}  & 0.5057  & 0.4053 & 0.6366 & 0.2030  & 0.0423 & 0.2056 \\
		& Mundlak  & 0.4203 & 0.1783 & 0.4223  & 0.4304 & 0.1854 & 0.4306 \\
		\bottomrule
	\end{tabular}
	\vspace{0.3em}
	\parbox{\linewidth}{\footnotesize
		Notes: Results are based on 1,000 independent simulations. 
		\textit{MAE} (mean absolute error) $= \frac{1}{n} \sum_{i=1}^{n} |\hat{\tau}_i - \tau_i|$, 
		\textit{MSE} (mean squared error) $= \frac{1}{n} \sum_{i=1}^{n} (\hat{\tau}_i - \tau_i)^2$, 
		\textit{RMSE} (root mean squared error) $= \sqrt{\text{MSE}}$. 
		In the tables, $M = 100$ denotes the number of groups, and $N_g$ is the group size.}
\end{table}

Under conditions of weak network dependence, as inter-group heterogeneity increases from low to high, \textit{GME-GNN} consistently delivers robust and precise estimates, exhibiting the least susceptibility to the effects of heterogeneity. The \textit{GNN-only} model is significantly impacted by heterogeneity. While it performs reasonably well under low heterogeneity, it exhibits severe bias in small group settings when heterogeneity is high. The Mundlak approach is insensitive to heterogeneity, but its estimation error consistently exceeds that of \textit{GME-GNN}, and increasing the group size does not significantly improve its performance.

\subsection{Simulation Under Strong Network Dependence}

Table \ref{tab:l-s} reports the simulation results under strong network dependence and low between-group heterogeneity. The \textit{GME-GNN} substantially outperforms both \textit{GNN-only} and the Mundlak estimator. For $N_g = 100 \sim 200$, \textit{GME-GNN} achieves an \textit{RMSE} of 0.5706, which is much lower than that of \textit{GNN-only} (0.8751) and Mundlak (0.9563). When the group size increases to $N_g = 300 \sim 400$, the \textit{RMSE} of \textit{GME-GNN} drops to 0.4366, whereas those of \textit{GNN-only} and Mundlak remain high at 0.9069 and 0.9580, respectively, showing little improvement.
\begin{table}[htbp]
	\centering
	\caption{Simulation results for Heterogeneity=low, Dependence=strong}
	\label{tab:l-s}
	\begin{tabular}{llcccccc}
		\toprule
		& & \multicolumn{3}{c}{$N_g=100\sim200$} & \multicolumn{3}{c}{$N_g=300\sim400$} \\
		$M$ & Method  & \textit{MAE} & \textit{MSE} & \textit{RMSE}  & \textit{MAE} & \textit{MSE} & \textit{RMSE}\\
		\midrule
		\multirow{3}{*}{100}
		& \textit{GME-GNN}   & 0.5298 & 0.3255 & 0.5706  & 0.4083 & 0.1906 & 0.4366  \\
		& \textit{GNN-only}  & 0.7643 & 0.7658 & 0.8751  & 0.8307 & 0.8255 & 0.9069 \\
		& Mundlak  & 0.9624 & 0.9146 & 0.9563  & 0.9538 & 0.9177 & 0.9580 \\
		\bottomrule
	\end{tabular}
	\vspace{0.3em}
	\parbox{\linewidth}{\footnotesize
		Notes: Results are based on 1,000 independent simulations. 
		\textit{MAE} (mean absolute error) $= \frac{1}{n} \sum_{i=1}^{n} |\hat{\tau}_i - \tau_i|$, 
		\textit{MSE} (mean squared error) $= \frac{1}{n} \sum_{i=1}^{n} (\hat{\tau}_i - \tau_i)^2$, 
		\textit{RMSE} (root mean squared error) $= \sqrt{\text{MSE}}$. 
		In the tables, $M = 100$ denotes the number of groups, and $N_g$ is the group size.}
\end{table}

Table \ref{tab:h-s} shows the results under high heterogeneity and strong network dependence, where all methods exhibit larger errors. For $N_g = 100 \sim 200$, \textit{GME-GNN} has an \textit{RMSE} of 0.9309, compared to 1.8120 for \textit{GNN-only} and 1.1131 for Mundlak. When the group size increases to $N_g = 300 \sim 400$, the \textit{RMSE} of \textit{GME-GNN} declines sharply to 0.5788, making it the best-performing method. In contrast, the \textit{RMSE} of \textit{GNN-only} rises to 2.4869, indicating severe performance deterioration, while that of Mundlak decreases slightly to 1.0213, still significantly higher than \textit{GME-GNN}.

\begin{table}[htbp]
	\centering
	\caption{Simulation results for Heterogeneity=high, Dependence=strong}
	\label{tab:h-s}
	\begin{tabular}{llcccccc}
		\toprule
		& & \multicolumn{3}{c}{$N_g=100\sim200$} & \multicolumn{3}{c}{$N_g=300\sim400$} \\
		$M$ & Method  & \textit{MAE} & \textit{MSE} & \textit{RMSE}  & \textit{MAE} & \textit{MSE} & \textit{RMSE}\\
		\midrule
		\multirow{3}{*}{100}
		& \textit{GME-GNN}   & 0.9309 & 0.8665 & 0.9309  & 0.4906 & 0.3350 & 0.5788  \\
		& \textit{GNN-only}  & 1.8120 & 3.2834 & 1.8120  & 1.8452 & 6.1845 & 2.4869 \\
		& Mundlak  & 1.1131 & 1.2389 & 1.1131  & 1.0109 & 1.0431 & 1.0213 \\
		\bottomrule
	\end{tabular}
	\vspace{0.3em}
	\parbox{\linewidth}{\footnotesize
		Notes: Results are based on 1,000 independent simulations. 
		\textit{MAE} (mean absolute error) $= \frac{1}{n} \sum_{i=1}^{n} |\hat{\tau}_i - \tau_i|$, 
		\textit{MSE} (mean squared error) $= \frac{1}{n} \sum_{i=1}^{n} (\hat{\tau}_i - \tau_i)^2$, 
		\textit{RMSE} (root mean squared error) $= \sqrt{\text{MSE}}$. 
		In the tables, $M = 100$ denotes the number of groups, and $N_g$ is the group size.}
\end{table}

Under conditions of strong network dependence, estimation methods face stricter demands. At low heterogeneity, \textit{GME-GNN} significantly outperforms the other two methods, and increasing group size helps improve its accuracy, while the errors of \textit{GNN‑only} and Mundlak remain persistently high. At high heterogeneity, \textit{GME‑GNN} remains the most robust performer. The \textit{GNN‑only} method exhibits extreme instability and is heavily influenced by heterogeneity, whereas the Mundlak approach is relatively less affected by heterogeneity, yet its overall error is significantly higher than that of \textit{GME‑GNN}.


\section{Empirical Application}
\subsection{Research Context and Data Source}
This study empirically examines the causal effects of China’s high-tech enterprise certification policy. This policy serves as a key instrument for promoting innovation and industrial upgrading. It works by providing multiple incentives, including reducing the corporate income tax rate for certified enterprises from 25\% to 15\%, allowing additional tax deductions for R\&D expenses, and offering financial support such as credit assistance and special subsidies. Through the dual mechanisms of cost compensation and profit enhancement, these policies directly boost enterprises' R\&D intensity, innovation efficiency, and total factor productivity. In contrast to traditional, universally-applied policies, this approach employs certification as a filtering mechanism. Consequently, its influence is not confined to the certified enterprises. Instead, it radiates outward through technology diffusion, factor flows, and supply chain networks, reaching both upstream and downstream players in the industrial chain.
However, the certification status of firms is non-random; it is systematically linked to factors such as firm size, and governance structure, technological capability.
{\tiny }
This study employs cross-sectional data from 2024 for empirical analysis, with the sample consisting of A-share listed companies in China. Core treatment variables and categorical information, such as the firm's high-tech enterprise certification status and industry classification, are sourced from the Wind Financial Database. Financial indicators and governance characteristics, including return on assets (\textit{ROA}), firm size, asset-liability ratio, cash flow ratio, and firm age, are obtained from the \textit{CSMAR} Database. To map network dependence structures, data on upstream and downstream supply chain relationships between firms are sourced from the \textit{RESSET} Financial Research Database. After removing observations with missing key variables, abnormal listing status, and outliers, the final cross-sectional sample comprises 4693 listed companies. \textit{ROA} is the main measure of financial performance. Key firm-level control variables such as firm size, leverage ratio, and cash flow ratio are also included.

\subsection{Group Definition and Network Construction}
This study classifies the sample according to the primary industry categories specified in the Guidelines for the Industry Classification of Listed Companies(2012) issued by the China Securities Regulatory Commission, initially dividing it into 18 industry groups. We excluded industries with too few firms to guarantee sufficient sample sizes, robust statistical estimates, and the reliable identification of network structures.  Specifically, industries with codes $A$ (Agriculture, Forestry, Animal Husbandry and Fishery), $H$ (Accommodation and Catering Services), $O$ (Resident Services, Repair, and Other Services), $P$ (Education), $Q$ (Health and Social Work), and $S$ (Comprehensive Industries) each contained fewer than 50 firms in the sample. Such small group sizes could compromise the precision of within-group estimates and hinder the identification of inter-group heterogeneity and within-group network dependence structures. After removing these industries, the retained groups provide an adequate sample base, which not only helps control for unobserved heterogeneity across groups but also offers a reliable data foundation for characterizing within-group network dependencies.

To capture within-group interactions and dependencies, a supply chain network is built for each retained industry group using actual inter-firm economic ties.  
For each retained industry group, we construct an undirected binary adjacency matrix \( \mathbf{A}_g \), where firms within the industry serve as nodes and the presence (or absence) of an actual supply chain relationship between firms defines the edges. The element $A_{ij}$ of matrix $A$ is assigned a value of 1 if there exists an upstream or downstream relationship between firm $i$ and firm $j$; otherwise, it takes a value of 0.
By constructing supply chain networks within each industry group, we are able to control for unobserved heterogeneity at the industry level while precisely capturing within-group spillover effects and network interference.

\subsection{Empirical Results}
We need to clarify the set of balancing scores \( S(\mathbf{X}_g, \mathbf{W}_g, \mathbf{A}_g) \) used to support the unconfoundedness assumption. Therefore, \( \bar{S}_{g(i)} \) is implicitly defined as \( \bar{S}_{g(i)} = ( \bar{W}_g, \bar{X}_g, \overline{A_g W_g}, \\
\overline{A_g X_g}) \). To accurately measure the direct and spillover effects of high-tech certification on firm performance, we consider the estimator \( \tau(t, t') \) defined by the exposure mapping \( T_i = \left( W_i, \mathbf{1} \left\{ \sum_{j=1}^n A_{ij} W_j > 0 \right\} \right) \). According to this exposure mapping, firms are divided into four exposure levels: $T=0$ (untreated with no treated neighbors), $T=1$ (untreated with treated neighbors), $T=2$ (treated with no treated neighbors), and $T=3$ (treated with treated neighbors). Specifically, the comparison between $T=0$ and $T=1$ is used to identify the pure spillover effect, while the comparison between $T=0$ and $T=2$ is adopted to estimate the direct effect. 

The distribution of exposure levels across industries is shown in Figure \ref{fig2}. To ensure estimation validity, industries with insufficient observations are excluded, including Industry B and Industry K. As can be seen from Figure \ref{fig2}, the number of firms belonging to the $T=3$ group is relatively small, which cannot meet the data requirements for effective group comparison and parameter identification. Therefore, this study does not further conduct effect identification and empirical analysis for the $T=3$ group.
\begin{figure}[htbp]
	\centering
	\includegraphics[width=0.6\textwidth]{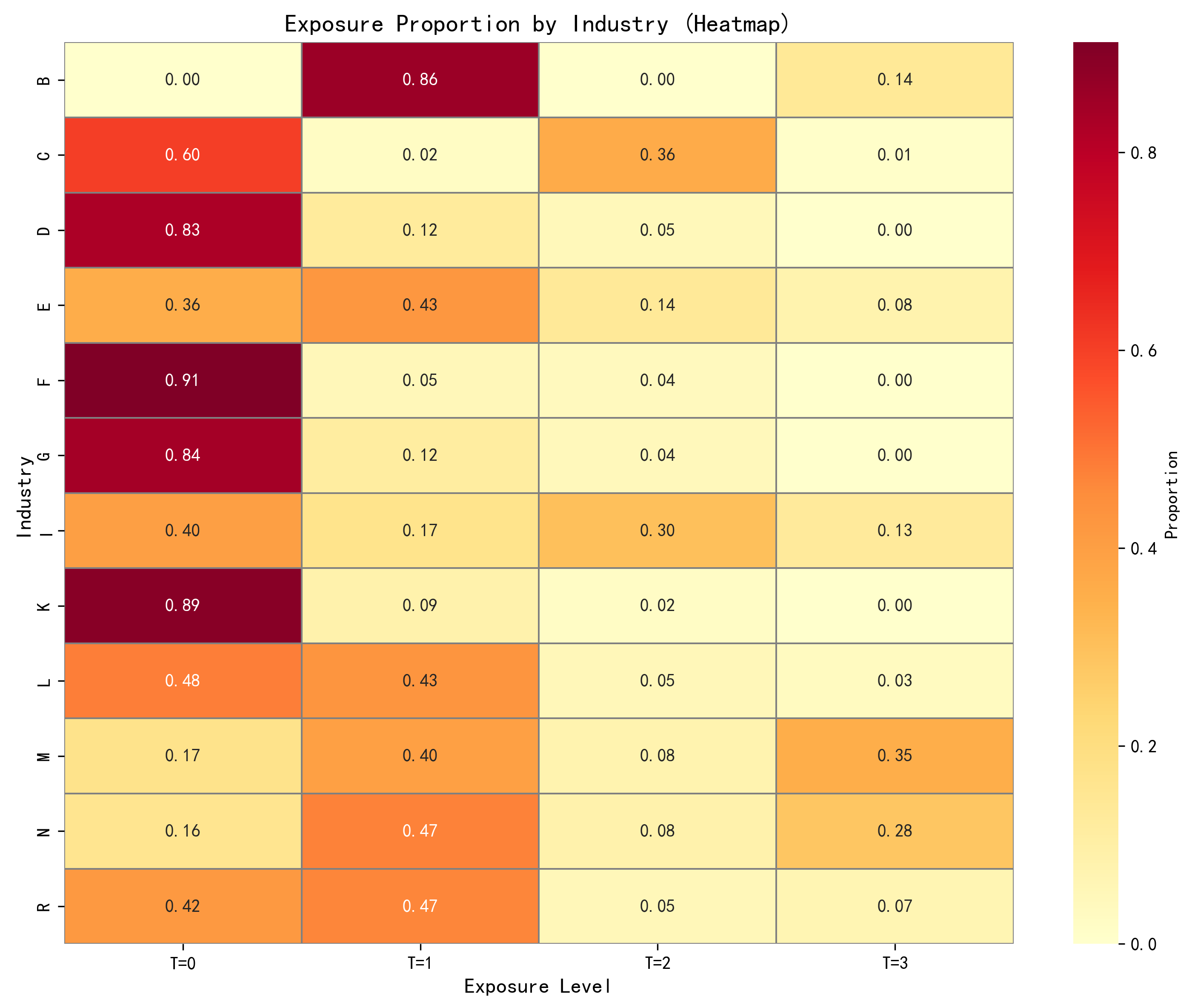}
	\caption{Industry Exposure Distribution
	}
	\label{fig2}
\end{figure}

Based on the balancing scores $\bar{S}_{g(i)}$, this study employs a \textit{GNN} to estimate propensity scores and conditional mean outcomes, with inverse probability weights truncated to $[0.01, 0.99]$ to mitigate extreme-value bias. Figure \ref{fig3} presents the kernel density distributions of estimated propensity scores across the four exposure levels. The distributions of $T=0$ and $T=1$ overlap substantially within $[0.25, 0.45]$, satisfying the common support condition for estimating the pure spillover effect. The distributions of $T=0$ and $T=2$ exhibit considerable overlap within $[0.20, 0.40]$, enabling the identification of the direct effect. The distribution of $T=3$ peaks sharply at $0.20 \sim 0.25$, nearly coinciding with that of $T=2$, and shows limited common support with other groups; therefore, spillover effects on treated units are not separately identified.
\begin{figure}[htbp]
	\centering
	\includegraphics[width=0.6\textwidth]{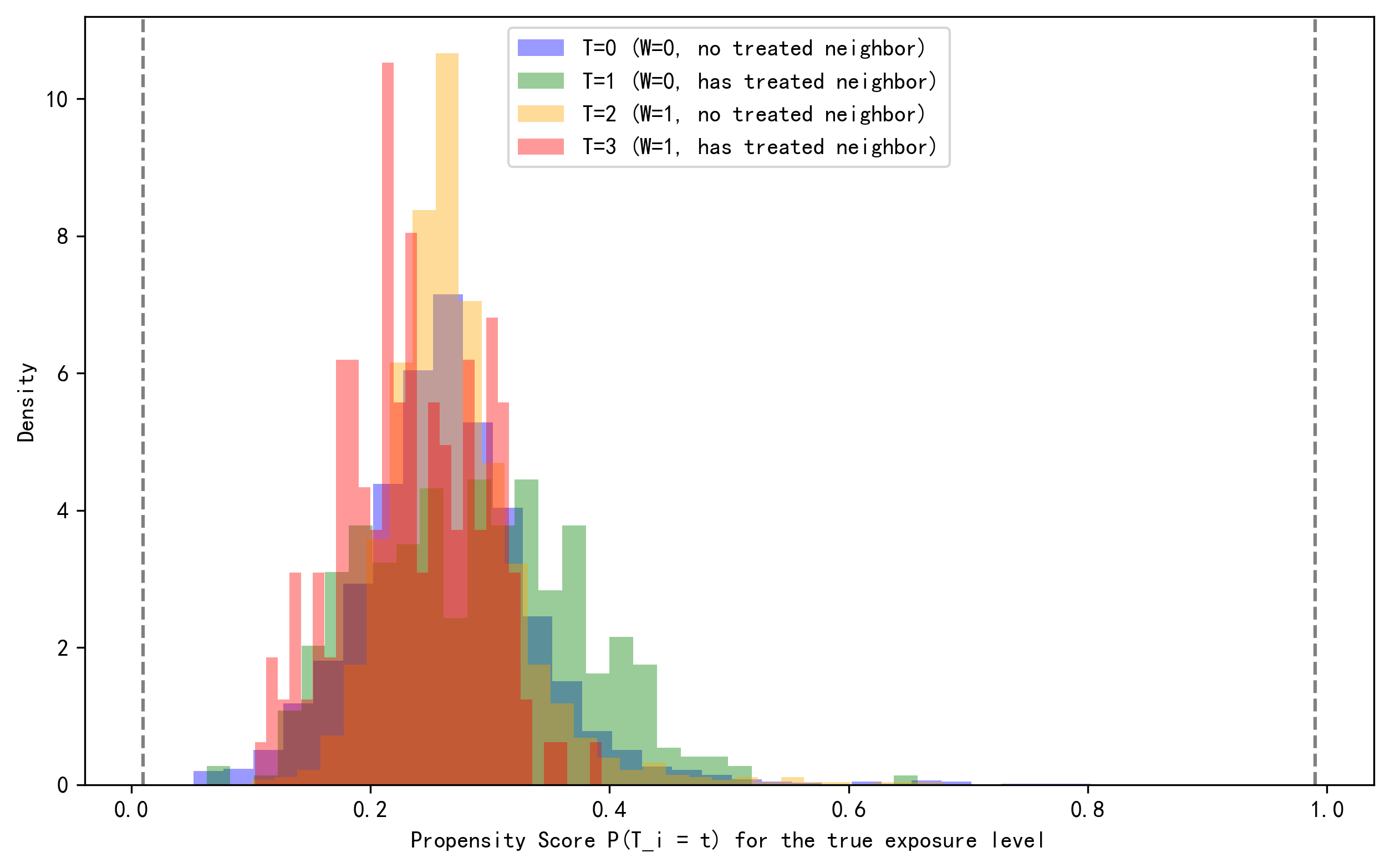} 
	\caption{Propensity Score Distribution by Treatment Status}
	\label{fig3}
\end{figure}
Table \ref{tab:effects} reports the estimated effects. The \textit{GME}-\textit{GNN} estimator yields a direct effect of 0.0274 and a spillover effect of 0.0182, both significant at the 0.1\% level. High-tech certification improves certified firms' ROA by 2.74 percentage points, while untreated firms with treated neighbors gain 1.82 percentage points through supply-chain spillovers. The \textit{GNN}-only estimator produces a larger direct effect (0.0908), confirming that the Mundlak-type balancing statistics effectively control for between-group heterogeneity.  We do not report the traditional Mundlak estimator here because it relies on the \textit{SUTVA} and cannot accommodate within-group network interference.

\begin{table}[h!]
	\centering
	\caption{Estimated Direct and Spillover Effects}
	\label{tab:effects}
	\small
	\begin{tabular}{lcc}
		\toprule
		& Direct Effect & Spillover Effect \\
		\midrule
		GME-GNN & 0.0274$^{***}$ & 0.0182$^{***}$ \\
		& (0.0008) & (0.0008) \\
		\addlinespace
		GNN-only & 0.0908$^{***}$ & 0.0251$^{***}$ \\
		& (0.0032) & (0.0004) \\
		\bottomrule
		\multicolumn{3}{l}{\footnotesize Notes: Standard errors in parentheses. $^{***}$p$<$0.001.}
	\end{tabular}
\end{table}

\section{Conclusion}
\label{sec:conc}
This paper proposes a generalized Mundlak estimator based on graph neural networks (\textit{GME}\text{-}\textit{GNN}) to address causal inference under network confounding, where both between-group heterogeneity and within-group dependence are present. The proposed estimator combines the strength of the Mundlak approach in controlling between-group confounding with the ability of graph neural networks to capture nonlinear dependencies and network interference. Theoretical analysis shows that, under appropriate assumptions, the \textit{GME}\text{-}\textit{GNN} estimator satisfies double robustness and asymptotic normality. A network \textit{HAC} variance estimator is employed to facilitate valid inference.

Simulation studies have verified that under varying levels of inter-group heterogeneity and network dependence, \textit{GME}\text{-}\textit{GNN} achieves lower root mean square error and stronger robustness compared to the traditional Mundlak estimator and pure \textit{GNN} methods.
An empirical analysis of the impact of China's high-tech enterprise certification policy on the financial performance of A‑share listed companies further demonstrates the effectiveness of the method. The \textit{GME}\text{-}\textit{GNN} estimates a direct effect of 2.74 percentage points and a spillover effect of 1.82 percentage points. This result highlights the model’s advantage in controlling for industry‑level confounders and capturing spillover effects through supply‑chain networks. Overall, \textit{GME}\text{-}\textit{GNN} provides a flexible, robust, and interpretable tool for causal inference in observational studies with network interference and group-level heterogeneity.


 
\section{Disclosure statement}\label{disclosure-statement}
No competing interest is declared.

\section{Data Availability Statement}
The data analyzed in this study are obtained from public databases. Data are available from the corresponding author upon reasonable request.

\bibliography{bibliography.bib}

\end{document}